\documentclass[review]{elsarticle}
\usepackage{lineno}
\modulolinenumbers[5]
\makeatletter
\def\ps@pprintTitle{%
 \let\@oddhead\@empty
 \let\@evenhead\@empty
 \def\@oddfoot{}%
 \let\@evenfoot\@oddfoot}
\makeatother

\usepackage[T1]{fontenc}
\usepackage[latin9]{inputenc}
\usepackage{geometry}
\geometry{verbose}
\usepackage{mathtools}
\usepackage{amsmath}
\usepackage{amsthm}
\usepackage{amssymb}
\usepackage{stackrel}

\theoremstyle{definition}
\newtheorem{defn}{\protect\definitionname}
\theoremstyle{plain}
\newtheorem{prop}{\protect\propositionname}
\newtheorem{cor}{\protect\corollaryname}
\newtheorem{lem}{\protect\lemmaname}
\newtheorem{thm}{\protect\theoremname}
\newtheorem{rem}{\protect\remarkname}
\theoremstyle{definition}
\newtheorem{example}{\protect\examplename}
\theoremstyle{remark}
\newtheorem{claim}{\protect\claimname}


\usepackage{listings}

\setcitestyle{round}
\usepackage[hang,flushmargin]{footmisc}
\usepackage[charter]{mathdesign}

\usepackage[english]{babel}
\addto\captionsaustralian{\renewcommand{\claimname}{Claim}}
\addto\captionsaustralian{\renewcommand{\corollaryname}{Corollary}}
\addto\captionsaustralian{\renewcommand{\definitionname}{Definition}}
\addto\captionsaustralian{\renewcommand{\examplename}{Example}}
\addto\captionsaustralian{\renewcommand{\lemmaname}{Lemma}}
\addto\captionsaustralian{\renewcommand{\propositionname}{Proposition}}
\addto\captionsaustralian{\renewcommand{\remarkname}{Remark}}
\addto\captionsaustralian{\renewcommand{\theoremname}{Theorem}}
\addto\captionsenglish{\renewcommand{\claimname}{Claim}}
\addto\captionsenglish{\renewcommand{\corollaryname}{Corollary}}
\addto\captionsenglish{\renewcommand{\definitionname}{Definition}}
\addto\captionsenglish{\renewcommand{\examplename}{Example}}
\addto\captionsenglish{\renewcommand{\lemmaname}{Lemma}}
\addto\captionsenglish{\renewcommand{\propositionname}{Proposition}}
\addto\captionsenglish{\renewcommand{\remarkname}{Remark}}
\addto\captionsenglish{\renewcommand{\theoremname}{Theorem}}

\providecommand{\claimname}{Claim}
\providecommand{\corollaryname}{Corollary}
\providecommand{\definitionname}{Definition}
\providecommand{\examplename}{Example}
\providecommand{\lemmaname}{Lemma}
\providecommand{\propositionname}{Proposition}
\providecommand{\remarkname}{Remark}
\providecommand{\theoremname}{Theorem}

\begin{document}
\begin{frontmatter}

\title{On comparison of expert}
\author[mymainaddress]{Itay Kavaler\corref{mycorrespondingauthor}}
\cortext[mycorrespondingauthor]{Corresponding author}
\ead{itayk@campus.technion.ac.il} 
\author[mymainaddress]{Rann Smorodinsky} 
\ead{rann@ie.technion.ac.il}
\address[mymainaddress]{Davidson Faculty of Industrial Engineering and Management, Technion, Haifa 3200003, Israel}

\begin{abstract}
A policy maker faces a sequence of unknown outcomes. At each stage two (self-proclaimed) experts provide probabilistic forecasts on the outcome in the next stage. A comparison test is a protocol for the policy maker to (eventually) decide which of the two experts is better informed. The protocol takes as input the sequence of pairs of forecasts and actual outcomes and (weakly) ranks the two experts.

We focus on anonymous and non-counterfactual comparison tests and propose two natural properties to which such a comparison test must adhere. We show that these determine the test in an essentially unique way. The resulting test is a function of the derivative of the induced pair of measures at the realized outcomes. 
\end{abstract}

\begin{keyword}
\texttt{forecasting\sep probability\sep testing} 
\JEL\texttt{C11\sep C70\sep C73\sep D83}
\end{keyword}

\end{frontmatter}

\section{{\normalsize{}Introduction }}

{}{} The literature on expert testing has, by and large, treated
the question of whether a self-proclaimed expert can be identified
as such, while also not allowing for charlatans to pass the test.
A striking result due to \Citet*{Sandroni-2003} is that no such test
exists without additional structural assumptions regarding the problem.
The basic premise of this literature is the validity of the underlying
question of whether a forecaster, or rather a probabilistic model,
is correct or false. In a hypothetical world, where only one model
exists and the tester can only entertain the services of a single
expert, this may make sense. Even then, one might wonder what is the
tester to do whenever she rejects the expert. Does she turn to another
expert or to her own intuition? In any case she would probably, implicitly,
utilize an alternative (possibly untested) model.

This motivates us to seek an alternative approach to the issue
of expert testing and that is a comparison of experts, which is the
approach we pursue here. In this approach the tester is exposed to
a few alternative models (forecasters) and a single realization of
events. The tester then compares the alternative forecasters and decides
which is the better informed one. Facing many (possibly conflicting)
experts is commonplace in weather forecasting, financial forecasting,
medical prognosis and more. Nevertheless, the design of comparison
tests has been almost entirely ignored in the literature on expert
testing. Two exceptions are \Citet*{Al-Najjar-2008} and \Citet*{Feinberg-Stewart-2008}
which we discuss in the section on related literature.

{}The approach we take in this paper is axiomatic. After defining
exactly what is meant by a comparison test we will turn to discuss
some desirable properties for such tests. We then construct a test
that complies with all the desired properties and show it is essentially
unique. The setting we focus on is that of two experts and a test
which (weakly) ranks the two and hence its range consists of three
outcomes. It may either point at one of the two experts as being better
informed or it may be indecisive. Let us discuss the properties that
are central to our main results.

\textbf{Anonymity} - A test is {\em anonymous} if it does not depend
on the identity of the agents but only on their forecasts.

\textbf{Error-free} - Let us assume that one of the experts has the
correct model (namely, he would have passed a standard single expert
test which has no type-1 errors). An {\em error-free} test will
surely not point at the second expert as the superior one (albeit,
it may provide a non-conclusive outcome).

\textbf{Reasonable }- Let us consider an event, $A$, that has positive
probability according to the first expert but zero probability according
to the second. Conditional on the occurrence of the event $A,$ a
{\em reasonable} test must assign positive probability to the first
expert being better informed than the second.

The approach taken in this paper can be considered as a contribution
to the hypothesis testing literature in statistics where a forecaster
is associated with a hypothesis. In this context we propose a hypothesis
test that complies with a set of fundamental properties which we refer
to as axioms. In contrast, a central thrust for the hypothesis testing
literature (for two hypotheses) is the pair of notions of significance
level and power of a test. In that literature one hypothesis is considered
as the null hypothesis while the other serves as an alternative. A
test is designed to either reject the null hypothesis, in which case
it accepts the alternative, or fail to reject it (a binary outcome).
The significance level of a test is the probability of rejecting the
null hypothesis whenever it is correct (type-1 error) while the power
of the test is the probability of rejecting the null hypothesis assuming
the alternative one is correct (the complement of a type-2 error). 

In contrast with the aforementioned binary outcome that is prevalent
in the hypothesis testing literature we allow, in addition, for an
inconclusive outcome. Recall the celebrated Neyman-Pearson lemma which
characterizes a test with the maximal power subject to an upper bound
on the significance level. The possibility of an inconclusive outcome,
in our framework, allows us to design a test where both type-1 and
type-2 errors have zero probability.\footnote{Note that we abuse the statistical terminology. In statistics the
notion of rejection is always used in the context of the null hypothesis.
In our model we assume symmetry between the alternatives and so we
discuss rejection also in the context of the alternative hypothesis.
As a consequence, an error of type-1 is defined as the probability
of accepting the alternative hypothesis whenever the null hypothesis
is correct, and symmetrically, an error of type-2 is the probability
of accepting the null hypothesis whenever the alternative one is correct.}

Interestingly, the test proposed in the Neyman-Pearson lemma, similar
to ours, also hinges on the likelihood ratio.\footnote{The test proposed in the Neyman-Pearson lemma rejects the null hypothesis
whenever the likelihood ratio falls below some positive threshold.} In our approach we, a priori, treat both hypotheses symmetrically.
In the statistics literature, however, this is not the case and the
null hypothesis is, in some sense, the status quo hypothesis. This
asymmetry is manifested, for example, in the Neyman-Pearson lemma.

Note that in order to design a test that complies with a given significance
level and a given power one must know the full specification of the
two hypotheses. This is in contrast with our test which is universal,
in the sense that it does not rely on the specifications of the two
forecasts. Finally, let us comment that whereas hypothesis testing
is primarily discussed in the context of a finite sample, typically
from some IID distribution, our framework allows for sequences of
forecasts that are dependent on past outcomes as well as past forecasts
of the other expert.

\subsection{{\normalsize{}Results}}

We construct a specific comparison test based on the derivative of
two measures that are induced by the two forecasters. We prove that
this test is anonymous, error-free and reasonable. 

Two tests are essentially equal if their verdict is equal with probability
one for any pair of forecasters.\footnote{``with probability one\textquotedbl{} is meant with respect to the
probability measure induced by either of the two forecasters. Please
refer to Definition \ref{Def: different test } for a more rigorous
statement.} The test we construct turns out to be unique modulo this equivalence
relation. In other words, for any test that is not equivalent to ours
and is anonymous and reasonable there exist two forecasters for which
an error will be made (the probability of reversing the order) and
hence that test cannot be error-free.

Finally, our constructed test perfectly identifies the correct forecaster
whenever the two measures induced by the forecasters are mutually
singular with respect to each other. Requiring the test to identify
the correct expert when the measures are not mutually singular is
shown to be impossible.

\subsection{{\normalsize{}\label{subsec:Related-Literature}Related literature }}

Much of the literature on expert testing focuses on the single expert
setting. This literature dates back to the seminal paper of \Citet*{Dawid-1982}
who proposes the calibration test as a means to evaluate a forecaster
(in particular a weather forecaster) and shows that a true expert
will never fail this test. \Citet*{Foster-and-Vohra-1998} show how
a charlatan, who has no knowledge of the weather, can produce forecasts
which are always calibrated. The basic ingredient that allows the
charlatan to fool the test is the use of random forecasts. \Citet*{Lehrer-2001}
and \Citet*{Sandroni-and-Smorodinsky-and-Vohra-2003} extend this
observation to a broader class of calibration-like tests. {}Finally,
\Citet*{Sandroni-2003} shows that there exists no error-free test
that is immune to such random charlatans (see also extensions of Sandroni's
result in \citet{Shmaya-2008} and \Citet*{Olszewski-and-Sandroni-2008}).

To circumvent the negative results various authors suggest to limit
the set of models for which the test must be error-free (e.g., \Citet*{Al-Najjar-2010},
and \Citet*{Pomatto-2016}), or to limit the computational power associated
with the charlatan (e.g., \Citet*{Fortnow-and-Vohra-2009}) or to
replace measure-theoretic implausibility with topological implausibility
by resorting to the notion of \textit{{}category one sets}{} (e.g.,
\Citet*{Dekel-Feinberg}).

As previously mentioned, the comparison of experts has drawn little
attention in the community studying expert testing, with two exceptions.
\Citet*{Al-Najjar-2008} proposed a test based on the likelihood ratio
for comparing two experts. They show that if one expert knows the
true process whereas the other is uninformed, then one of the following
must occur: either, the test correctly identifies the informed expert,
or the forecasts made by the uninformed expert are close to those
made by the informed one. It turns out that the test they propose
is anonymous and reasonable but is not error-free (Subsection \ref{subsec:The-likelihood-ratio},
Claim \ref{claim 1: al-najar}). An asymptotic version of this likelihood
ratio, however, will play a crucial role in our construction.

\Citet*{Feinberg-Stewart-2008} study an infinite horizon model of
testing multiple experts using a cross-calibration test. In their
test $N$ experts are tested simultaneously; each expert is tested
according to a calibration restricted to dates where not only does
the expert have a fixed forecast but the other experts also have a
fixed forecast, possibly with different values (a formal definition
is given in Appendix \ref{sec:Appendix B The-Cross-Calibration}).
They showed that a true expert is guaranteed to pass the cross-calibration
no matter what strategies are employed by the other experts.

In addition, they prove that in the presence of an informed expert,
the subset of data-generating processes under which an ignorant expert
(a charlatan) will pass the cross-calibration test with positive probability,
is topologically ``small''. The cross calibration test naturally
induces a comparison test for two experts: If one expert passes while
the other does not then he is the better informed one, while in all
other cases the test is inconclusive.\footnote{In fact, any single expert test induces a comparison test as follows.
Run the test for each of the two experts simultaneously and whenever
one passes and the other one fails rank them accordingly. Otherwise,
the test is inconclusive.} This induced comparison test turns out to be anonymous and error-free
but not reasonable (for further details see Claim \ref{claim 2 cross calibration is not reasonable}
in Subsection \ref{subsec:The-Cross-Calibration-test}).

\citet{Echenique-Shmaya-2008} study a setting where a decision maker
(DM) has some initial belief about the evolution of a system and takes
actions to maximize her payoff. The DM is offered an alternative hypothesis
and the paper provides a scheme for choosing between the two hypotheses
(a `test') with a guarantee on the payoffs. In particular, whenever
the scheme suggests to adopt the alternative hypothesis, the resulting
payoffs do not diminish in comparison with the hypothetical payoff
were that hypothesis rejected. In addition, their test is shown to
accept the initial belief whenever it is true. Their test, once again,
is based on the likelihood ratio but is obviously asymmetric and is
not error-free.

\Citet*{Pomatto-2016} poses a question that can be interpreted as
one about multiple expert testing. The paper characterizes classes
of hypotheses (`paradigms' in his jargon) for which there exists a
test that will pass the true hypothesis while rejecting any other
hypothesis in the class as well as any convex combination thereof.
The latter requirement is quite strong as it consequently means that
any pair of hypotheses is not testable, in contrast with our results.

Finally, the likelihood ratio, central to our result, appears in the
context of many statistical tests. Whereas our work derives a test
based on the likelihood ratio as an essentially unique test that conforms
with some fundamental properties, many papers and scholars in statistics
consider the likelihood ratio as axiomatic. This is captured in \citet{Edwards}'s
well-cited Likelihood Axiom: ``Within the framework of a statistical
model, all the information which the data provide concerning the relative
merits of two hypotheses is contained in the likelihood ratio of those
hypotheses on the data, and the likelihood ratio is to be interpreted
as the degree to which the data support the one hypothesis against
the other''.

\subsection{{\normalsize{}Finite or infinite test?}}

A long-standing debate in the literature on expert testing is whether
a test should be finite. A test is finite if its decision is made
in some finite time. In contrast, an infinite test may require the
infinite sequence of forecasts and realizations prior to making a
verdict. The argument for considering finite tests is that infinite
tests are impractical.

Although we sympathize with the argument that infinite tests are impractical
we do think they have academic merit. The construction of well-behaved
infinite, possibly impractical, tests would eventually shed light
on their finite counterpart. Thus, if the technical analysis underlying
the understanding of infinite tests is more tractable than that of
finite tests, then the study of infinite tests should be the port
of embarkation for this research endeavor. This is what motivates
our approach in this paper.\footnote{In a companion paper (\citet{Kavaler-Smorodinsky-2019}) we use some
of the machinery developed here to formulate a `well-behaved' finite
test.}

Furthermore, in expert testing we should allow experts to calibrate
their model given the data. Pushing the design of tests towards finite
tests may result in tests that give a verdict before these models
are refined and calibrated. Consider the classical example of an IID
process. A forecaster who is aware that indeed the process is such
may need time (and data) to calibrate the model and to learn its parameter.
Initial forecasts may be wrong, yet those made after a calibration
phase become more accurate and long-run predictions are spot-on. To
capture the importance of such a preliminary calibration test and
patience in model (expert) selection we introduce the following notion:\footnote{In a way the recent success of `deep learning' based on enormous data
sets (paralleling our interest in long-run observations) testifies
to the importance of patience in model (expert) selection and the
benefit of looking at many data points.}

\textbf{Tail test -} A tail test is one which depends only on forecasts
made eventually, after the calibration phase. Whereas much of the
literature emphasizes tests that provide their verdict at some finite
outcome, we take the opposite approach for some of our results and
consider comparison tests that are based on a long-run performance.
It turns out that the test proposed here, which is anonymous, error-free
and reasonable, is also a tail test. It is also unique in a very strong
sense---the error of any alternative tail test, which is also anonymous
and reasonable, can be made arbitrarily close to one.

\section{{\normalsize{}Model}}

{}{}At the beginning of each period $t=1,2,\ldots$ an outcome $\omega_{t}$,
drawn randomly by Nature from the set $\Omega=\{0,1\},$ is realized.\footnote{{}For expository reasons we restrict attention to a binary set $\Omega=\{0,1\}$.
The results extend to any finite set.} Before $\omega_{t}$ is realized, two self-proclaimed experts (sometimes
referred to as forecasters) simultaneously announce their forecast
in the form of a probability distribution over $\Omega$. We assume
that both forecasters observe all past outcomes and all previous pairs
of forecasts. For any (infinite) realization, $\omega\coloneqq\{\omega_{1},\omega_{2},\ldots\}\in\Omega^{\infty},$
we denote by $\omega^{t}\coloneqq\{\omega_{1},\omega_{2},\ldots,\omega_{t}\}$
its prefix of length $t$ (sometimes referred to as the partial history
of outcomes up to period $t$), and set $\omega^{0}\coloneqq\emptyset$.

We will abuse notation and use $\omega^{t}$ to denote the cylinder
set $\{\hat{\omega}\in\Omega^{\infty}|\ \hat{\omega}^{t}=\omega^{t}\}.$
In other words, $\omega^{t}$ will also denote the set of realizations
which share a common prefix of length $t$. For any $t$ we denote
by $g_{t}$ the $\sigma$-algebra on $\Omega^{\infty}$ generated
by the cylinder sets $\omega^{t}$ and let $g_{\infty}\coloneqq\sigma(\stackrel[t=0]{\infty}{\bigcup}g_{t})$
denote the smallest $\sigma$-algebra which consists of all cylinders
(also known as the Borel $\sigma$-algebra). Let $\Delta(\Omega^{\infty})$
be the set of all probability measures defined over the measurable
space $(\Omega^{\infty},g_{\infty})$.

At each stage, two forecasts (elements in $\Delta(\Omega)$) are provided
by two experts. Let $(\Omega\times\Delta(\Omega)\times\Delta(\Omega))^{t}$
be the set of all sequences composed of outcomes and pairs of forecasts
made up to time $t$ and let $\underset{t\geq0}{\bigcup}(\Omega\times\Delta(\Omega)\times\Delta(\Omega))^{t}$
be the set of all such finite sequences.

A (pure) forecasting strategy, $f,$ is a function that maps finite
histories to a probability distribution over $\Omega$. Formally,
$f\colon\underset{t\geq0}{\bigcup}(\Omega\times\Delta(\Omega)\times\Delta(\Omega))^{t}\longrightarrow\Delta(\Omega).$
Note that each forecast provided by one expert may depend, inter alia,
on those provided by the other expert in previous stages. Let $F$
denote the set of all forecasting strategies.

A probability measure $P\in\Delta(\Omega^{\infty})$ naturally induces
a (set of) corresponding forecasting strategy, denoted $f_{P}$, that
satisfies for any $\omega\in\Omega^{\infty}$ and $t>0$ such that
$P(\omega^{t})>0,$ 
\[
f_{P}(\omega^{t},\cdot,\cdot)[\omega_{t+1}]=P(\omega_{t+1}|\omega^{t}).
\]

\noindent Thus, the forecasting strategy $f_{P}$ derives its forecasts
from the original measure $P$ via Bayes rule. Note that this does
not restrict the forecast of $f_{P}$ over cylinders, $\omega^{t}$,
for which $P(\omega^{t})=0$.\footnote{Hereinafter we will often abuse notation and use $P$ instead of $f_{P}.$}

In the other direction, we abuse notation and given an ordered pair
of forecasting strategies, $f\coloneqq(f_{0},f_{1})\in F\times F$
(henceforth $f$), let $h$ be a function that maps each triplet $(\omega,f_{0},f_{1})$
to its uniquely induced play path: $h(\omega,f_{0},f_{1})\in(\varOmega\times\Delta(\Omega)\times\Delta(\Omega))^{\infty}$.
Additionally, for any $n\geq0$, the prefix (of length $n$) and the
suffix (starting at $n$) of $h(\omega,f_{0},f_{1})$ are denoted
by $h^{n}(\omega,f_{0},f_{1})$ and $h_{n}(\omega,f_{0},f_{1})$,
respectively. Whenever $(\omega,f_{0},f_{1})$ is clear from the context
we abuse notation and denote these by $h,\ h^{n},$ and $h_{n}$ respectively.

Now observe that a single forecasting strategy need not induce a measure
as its output may also depend on what another expert forecasts. However,
an ordered pair of forecasting strategies $f,$ does induce a pair
of probability measures, denoted $P_{0}^{f}(\cdot),P_{1}^{f}(\cdot)$,
over $\Omega^{\infty}$. By Kolomogorov's extension theorem, it is
enough to define these probabilities over the cylinder sets of the
form $\omega^{t},$ denoted for an arbitrary $\omega\in\Omega^{\infty},\ t>0$
and $i\in\{0,1\}\colon$

\begin{equation}
P_{i}^{f}(\omega^{t})=\stackrel[n=1]{t}{\prod}f_{i}(h^{n-1})[\omega_{n}].\label{eq:1 P(C_wt) =00003D00003D00003D00003D multiplef_i}
\end{equation}

\subsection{{\normalsize{}\label{subsec:first two axioms}Comparison test}}

{}A {\em comparison test} is a measurable function whose input
is a pair of two forecasting strategies and a realization, and whose
output is a (weak) order over the two experts. Formally, 
\[
T\colon\varOmega^{\infty}\times F\times F\longrightarrow\{0,\frac{1}{2},1\}
\]
where $T=i\neq\frac{1}{2}$ implies that expert $i$ is claimed as
better informed, while $T=\frac{1}{2}$ implies the test is inconclusive
(this cannot be avoided, for example, when both experts' forecasts
always agree).

A comparison test should, a priori, treat both experts similarly.
This is captured by the following notion of anonymity of a test.
\begin{defn}
A test $T$ is {\em anonymous} if for all $\omega\in\Omega^{\infty}$
and $f_{0},f_{1}\in F,$
\[
T(\omega,f_{0},f_{1})=1-T(\omega,f_{1},f_{0}).
\]
\end{defn}
In other words, the expert chosen by $T$ does not depend on the expert's
identity ($0$ or $1$). Note that whenever $f_{0}=f_{1},$ an anonymous
test $T$ must be inconclusive and always output $0.5$.

We follow the lion's share of the literature on single expert testing
and, furthermore, require that the outcome of the comparison test
depends only on predictions made along the realized play path. Formally, 
\begin{defn}
\label{Def  non-counterfactual definion}A test $T$ is {\em non-counterfactual}
if there exists a function 

\noindent 
\[
\hat{T}\colon(\varOmega\times\Delta(\Omega)\times\Delta(\Omega))^{\infty}\longrightarrow\{0,\frac{1}{2},1\}
\]
such that $T=\hat{T}\circ h.$ 
\end{defn}
\noindent Hereafter we restrict attention to anonymous and non-counterfactual
tests.

\subsection{{\normalsize{}\label{subsec:second two axioms}Desired properties}}

We now turn to formally define two desired properties for a comparison
test followed by the motivation. We will later argue that these induce
an essentially single comparison test.

The first property requires that whenever one of the experts has the
correct model (namely, he would have passed a standard single expert
test which has no type-1 errors) the test will surely not point at
the second expert as the superior one (albeit, it may provide a non-conclusive
outcome).

For any test, $T$, and an ordered pair of forecasting strategies,
$f,$ we denote by $\{T(\cdot,f)=k\}$ the set of realizations for
which the test outputs $k$.
\begin{defn}
\label{Def error-free}A test $T$ is {\em error-free} if for all
$f$ and $i\in\{0,1\},$ $P_{1-i}^{f}(\{T(\cdot,f)=i\})=0.$
\end{defn}
In other words, whenever one expert knows the probability distribution
governing the realizations of Nature, the test must not identify the
other expert  as the true expert. In the jargon of hypothesis testing,
Definition \ref{Def error-free} implies that an error-free test must
eliminate errors of type-1 and, symmetrically, type-2.

One trivial example of an error-free test is the test that constantly
outputs $\frac{1}{2}$. Note that it is also anonymous and non-counterfactual.
We shall later propose a non-trivial error-free test. Unfortunately
that test will also be indecisive at times but not always. In fact,
it turns out that error-free tests must be indecisive whenever the
experts induce a pair of measures that are mutually absolutely continuous.
Formally,
\begin{prop}
{}{}\label{prop:if Error-free then A.C} Let $f$ be such that $P_{1}^{f}\ll P_{0}^{f}.$
If \textup{$T$} is error-free then $P_{0}^{f}(\{T(\cdot,f)=0\})<1.$ 
\end{prop}
Thus, expert $0$, from his own perspective, cannot be confident that
the test will identify him as better informed. Combine this with the
definition of an error-free test to conclude that from the expert's
point of view that test must, at times, be inconclusive:
\begin{cor}
\label{Corollary a.c implies inconclusiveness}Let $f$ be such that
$P_{1}^{f}\ll P_{0}^{f}$. If \textup{$T$} is error-free then $P_{0}^{f}(\{T(\cdot,f)=\frac{1}{2}\})>0.$
\end{cor}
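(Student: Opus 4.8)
The plan is to deduce this directly from Proposition \ref{prop:if Error-free then A.C} together with the defining property of an error-free test, using only that the three level sets of $T$ form a measurable partition of $\Omega^{\infty}$.

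First I would note that since $T(\cdot,f)$ is measurable, the sets $\{T(\cdot,f)=0\}$, $\{T(\cdot,f)=\frac{1}{2}\}$ and $\{T(\cdot,f)=1\}$ are pairwise disjoint with union $\Omega^{\infty}$, so that
\[
P_{0}^{f}(\{T(\cdot,f)=0\})+P_{0}^{f}(\{T(\cdot,f)=\tfrac{1}{2}\})+P_{0}^{f}(\{T(\cdot,f)=1\})=1.
\]
Next, applying Definition \ref{Def error-free} with $i=1$ (so that $1-i=0$) gives $P_{0}^{f}(\{T(\cdot,f)=1\})=0$, and the displayed identity collapses to
\[
P_{0}^{f}(\{T(\cdot,f)=\tfrac{1}{2}\})=1-P_{0}^{f}(\{T(\cdot,f)=0\}).
\]
Finally, since $P_{1}^{f}\ll P_{0}^{f}$ by hypothesis and $T$ is error-free, Proposition \ref{prop:if Error-free then A.C} yields $P_{0}^{f}(\{T(\cdot,f)=0\})<1$; substituting into the previous display gives $P_{0}^{f}(\{T(\cdot,f)=\frac{1}{2}\})>0$, which is the assertion.

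Since each ingredient is either immediate (the partition identity) or already established (the $i=1$ instance of Definition \ref{Def error-free} and Proposition \ref{prop:if Error-free then A.C}), there is no genuine obstacle; the only point to handle with a little care is that error-freeness must be invoked in the ``off-diagonal'' direction, namely bounding the $P_{0}^{f}$-probability of the event that the test names expert $1$, which is exactly the $i=1$ case of the definition.
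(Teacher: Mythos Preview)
Your argument is correct and is exactly the approach the paper indicates: the paper does not give a separate proof of the corollary but simply remarks that one should ``combine this with the definition of an error-free test,'' which is precisely your partition identity together with the $i=1$ instance of Definition~\ref{Def error-free} and Proposition~\ref{prop:if Error-free then A.C}.
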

We now turn to the proof of Proposition \ref{prop:if Error-free then A.C}. 
\begin{proof}
Assume that 
\begin{equation}
P_{0}^{f}(\{T(\cdot,f)=0\})=1.\label{eq:4}
\end{equation}

\noindent Since $P_{1}^{f}\ll P_{0}^{f}$ it follows from $\eqref{eq:4}$
that 
\[
P_{0}^{f}(\{T(\cdot,f)=0\}^{c})=0\Longrightarrow P_{1}^{f}(\{T(\cdot,f)=0\}^{c})=0.
\]
Therefore 
\[
P_{1}^{f}(\{T(\cdot,f)=0\})=1,
\]
which by the anonymity of $T$ contradicts the assumption that $T$
is error-free.\footnote{In the context of hypothesis testing, Corollary \ref{Corollary a.c implies inconclusiveness}
implies that an error-free test will not have a power of one whenever
the null hypothesis is absolutely continues w.r.t to the alternative
one.}
\end{proof}
The next property of a comparison test asserts that for any set of
realizations assigned zero probability by one forecaster and positive
probability by the other forecaster, there must be some subset of
realizations for which that other forecaster is deemed superior. Formally, 
\begin{defn}
\label{Def resonable}A test $T$ is {\em reasonable} if for all
$f$ and $i\in\{0,1\},$ and for all measurable set $A$,

\begin{equation}
P_{i}^{f}(A)>0\text{ and }P_{1-i}^{f}(A)=0\implies P_{i}^{f}(A\cap\{T(\cdot,f)=i\})>0.\label{eq:reasonableness condition}
\end{equation}
\end{defn}
\noindent It should be emphasized that reasonableness and error-free
are not related notions. To see why error-free does not imply reasonableness,
just consider the constant error-free test $T\equiv\frac{1}{2}$.
An example of a reasonable test that is not error-free is deferred
to the end of Subsection \ref{subsec:Tail-Test}.

\section{{\normalsize{}The derivative test}}

We now turn to our construction of a non-counterfactual, anonymous,
error-free and reasonable comparison test. Before doing so, some preliminaries
are required.

Given an ordered pair of forecasting strategies, $f,$ a realization
of Nature, $\omega\in\Omega^{\infty}$, we define the likelihood ratio
between the two forecasters at time $t$ as, 
\[
D_{f_{0}}^{t}f_{1}(\omega)\coloneqq\stackrel[n=1]{t}{\prod}\frac{f_{1}(h^{n-1})[\omega_{n}]}{f_{0}(h^{n-1})[\omega_{n}]}.
\]
 Define the following limit functions:

\[
\overline{D}_{f_{0}}f_{1}(\omega)\coloneqq\begin{cases}
\begin{array}{l}
\underset{t\rightarrow\infty}{limsup}D_{f_{0}}^{t}f_{1}(\omega),\\
+\infty,
\end{array} & \begin{array}{l}
f_{0}(h^{n-1})[\omega_{n}]>0\text{ for all }n\geq1\\
f_{0}(h^{n-1})[\omega_{n}]=0\text{ for some }n.
\end{array}\end{cases}
\]

\[
\underline{D}{}_{f_{0}}f_{1}(\omega)\coloneqq\begin{cases}
\begin{array}{l}
\underset{t\rightarrow\infty}{liminf}D_{f_{0}}^{t}f_{1}(\omega),\\
+\infty,
\end{array} & \begin{array}{l}
f_{0}(h^{n-1})[\omega_{n}]>0\text{ for all }n\geq1\\
f_{0}(h^{n-1})[\omega_{n}]=0\text{ for some }n.
\end{array}\end{cases}
\]

Whenever the two limits coincide and take a finite value, we refer
to this value as the derivative of the forecasting strategy $f_{1}$
with respect to the forecasting strategy $f_{0}$ at $\omega$. Formally,
if $\overline{D}_{f_{0}}f_{1}(\omega)=\underline{D}{}_{f_{0}}f_{1}(\omega)<\infty,$
let $D_{f_{0}}f_{1}(\omega)=\overline{D}_{f_{0}}f_{1}(\omega)$ be
the {\em derivative} of $f_{1}$ with respect to $f_{0}$ at $\omega$.
We are now ready to define the {\em derivative test}, denoted ${\cal D},$
a non-counterfactual and anonymous test which we will show is error-free
and reasonable:

\begin{equation}
{\cal D}(\omega,f_{0},f_{1})=\begin{cases}
\begin{array}{l}
1,\\
0.5,\\
0,
\end{array} & \begin{array}{l}
D_{f_{1}}f_{0}(\omega)=0\\
\text{other}\\
D_{f_{0}}f_{1}(\omega)=0.
\end{array}\end{cases}\label{eq:the test}
\end{equation}

Expert $i$ is indicated as the true forecaster at $\omega$ whenever
the derivative of $f_{1-i}$ with respect to $f_{i}$ exists and equals
0. Intuitively, this happens when the probability assigned by expert
$i$ to the actual realization is infinitely larger than that assigned
by expert $1-i$.

It is obvious that ${\cal D}$  is non-counterfactual and anonymous.
We turn to prove that ${\cal D}$ is also error-free and reasonable.
To do so, we will need the following two technical observations regarding
derivatives of forecasting strategies:
\begin{lem}
{}{}\label{Lemma:liminf=00003Dlimsup} Fix $0<\alpha<\infty$ and
let $A\subset\Omega^{\infty}$ be a measurable set. Then

{}{}$a)\ \ A\subset\{\omega|\ \underline{D}{}_{f_{0}}f_{1}(\omega)\leq\alpha\}\Longrightarrow P_{1}^{f}(A)\leq\alpha P_{0}^{f}(A).$

{}{}$b)\ \ A\subset\{\omega|\ \overline{D}_{f_{0}}f_{1}(\omega)\geq\alpha\}\Longrightarrow P_{1}^{f}(A)\geq\alpha P_{0}^{f}(A).$ 
\end{lem}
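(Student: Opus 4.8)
The plan is to prove part $(a)$ and obtain part $(b)$ from it by a symmetry/duality argument, swapping the roles of $f_0$ and $f_1$. For part $(a)$, the natural tool is a stopping-time (first-passage) decomposition of the event $A$. Fix $\alpha$ and suppose $A\subset\{\omega:\underline D_{f_0}f_1(\omega)\le\alpha\}$. The quantity $D^t_{f_0}f_1(\omega)$ is the likelihood ratio $P_1^f(\omega^t)/P_0^f(\omega^t)$ along the realized path (here one must be slightly careful on cylinders where $f_0$ assigns probability zero — on those $\overline D$ and $\underline D$ are declared $+\infty$, so they do not enter the event under consideration). On the remaining realizations the process $\bigl(D^t_{f_0}f_1\bigr)_t$ is a nonnegative martingale with respect to $P_0^f$ and the filtration $(g_t)$, since $E_{P_0^f}\!\bigl[D^{t}_{f_0}f_1 \mid g_{t-1}\bigr] = D^{t-1}_{f_0}f_1\cdot\sum_{\omega_t} f_0(h^{t-1})[\omega_t]\cdot\frac{f_1(h^{t-1})[\omega_t]}{f_0(h^{t-1})[\omega_t]} = D^{t-1}_{f_0}f_1$.

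Now fix $\varepsilon>0$ and, for each $\omega\in A$, let $\tau(\omega)$ be the first time $t$ at which $D^t_{f_0}f_1(\omega)\le\alpha+\varepsilon$; because $\liminf_t D^t_{f_0}f_1(\omega)\le\alpha<\alpha+\varepsilon$ on $A$, the stopping time $\tau$ is finite on all of $A$. Partition $A$ according to the value of $\tau$ and the realized prefix $\omega^{\tau}$: write $A=\bigcup_{t\ge 1}\bigcup_{C\in\mathcal C_t} (A\cap C)$ where $\mathcal C_t$ is the (countable) collection of cylinders $\omega^t$ on which $\tau=t$ and on which $D^t_{f_0}f_1\le\alpha+\varepsilon$. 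For each such cylinder $C=\omega^t$ we have $P_1^f(C)=D^t_{f_0}f_1(\omega)\cdot P_0^f(C)\le(\alpha+\varepsilon)P_0^f(C)$ directly from the definition of $D^t$ and $\eqref{eq:1 P(C_wt) =00003D00003D00003D00003D multiplef_i}$. Since $A\cap C\subset C$, summing over the disjoint cylinders gives $P_1^f(A)\le\sum_{t,C}P_1^f(C)\le(\alpha+\varepsilon)\sum_{t,C}P_0^f(C)$. The cylinders in $\bigcup_t\mathcal C_t$ are pairwise disjoint, so $\sum_{t,C}P_0^f(C)\le 1$, but more importantly $\sum_{t,C}P_0^f(C)=P_0^f\bigl(\bigcup_{t,C} C\bigr)$ and $A\subset\bigcup_{t,C}C$, hence this is $\ge P_0^f(A)$ — wait, the inequality I need runs the other way, so I instead bound $\sum_{t,C}P_0^f(C)$ by noting $A\cap C$ need not fill $C$; the clean fix is to replace $C$ by $A\cap C$ in the final sum and use $P_1^f(A\cap C)\le P_1^f(C)\le(\alpha+\varepsilon)P_0^f(C)$ only after first intersecting: one sets things up so that $P_1^f(A\cap C)\le (\alpha+\varepsilon)P_0^f(A\cap C)$ fails in general, so the correct route is $P_1^f(A)\le\sum_C P_1^f(C)\le(\alpha+\varepsilon)\sum_C P_0^f(C)=(\alpha+\varepsilon)P_0^f(\widehat A)$ where $\widehat A:=\bigcup_C C\supseteq A$; this gives $P_1^f(A)\le(\alpha+\varepsilon)P_0^f(\widehat A)$, which is not yet $(\alpha+\varepsilon)P_0^f(A)$.

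The resolution — and the main technical obstacle — is to take $\widehat A$ as small as possible: choose, for each $\omega\in A$, not merely the first passage time but work inside the trace $\sigma$-algebra, or equivalently approximate $A$ from outside by such a countable union of disjoint cylinders whose $P_0^f$-measure exceeds $P_0^f(A)$ by at most $\varepsilon$ (possible since $P_0^f$ is a measure on the cylinder-generated $\sigma$-algebra and cylinders form a semiring, so outer regularity by cylinder-covers holds). Intersecting the first-passage decomposition with such a tight cover yields $P_0^f(\widehat A)\le P_0^f(A)+\varepsilon$, hence $P_1^f(A)\le(\alpha+\varepsilon)(P_0^f(A)+\varepsilon)$, and letting $\varepsilon\downarrow 0$ gives part $(a)$. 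For part $(b)$, observe that $\overline D_{f_0}f_1(\omega)\ge\alpha$ is, on the set where all relevant forecasts are positive, the same as $\underline D_{f_1}f_0(\omega)\le 1/\alpha$ (the $\limsup$ of $x_t$ being $\ge\alpha$ corresponds to the $\liminf$ of $1/x_t$ being $\le 1/\alpha$), so applying part $(a)$ with the two strategies interchanged and $\alpha$ replaced by $1/\alpha$ yields $P_0^f(A)\le(1/\alpha)P_1^f(A)$, i.e. $P_1^f(A)\ge\alpha P_0^f(A)$; the boundary realizations where some $f_0(h^{n-1})[\omega_n]=0$ contribute $P_0^f$-measure zero and are handled separately, so they do not affect either inequality.
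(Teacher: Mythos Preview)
Your overall strategy for part (a) matches the paper's: cover $A$ by cylinders on which the finite-stage ratio is at most $\alpha+\varepsilon$, then invoke outer regularity of $P_0^f$ to squeeze the $P_0^f$-mass of the cover down to $P_0^f(A)$. But the step you flag as ``the main technical obstacle'' is not resolved by what you actually write. ``Intersecting the first-passage decomposition with a tight cover'' does not work as stated: if $C$ is a first-passage cylinder and $C'$ a cover cylinder, then $C\cap C'$ is the deeper of the two, and when $C'\subsetneq C$ the ratio bound $P_1^f(C')/P_0^f(C')\le\alpha+\varepsilon$ need not hold on $C'$. The fix---and this is exactly what the paper does---is to drop the \emph{first} passage time and exploit the full force of $\liminf_t D^t_{f_0}f_1(\omega)\le\alpha$: there are \emph{infinitely many} $t$ with $D^t\le\alpha+\varepsilon$. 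Hence, given any open $U\supseteq A$, for each $\omega\in A$ one can choose $t$ large enough that simultaneously $\omega^t\subset U$ and $P_1^f(\omega^t)\le(\alpha+\varepsilon)P_0^f(\omega^t)$. Collecting these cylinders and passing to a disjoint subcollection (the paper does this via Lemma~\ref{Lemma: pairwise disjoint cylinders}) gives a cover $\overline B$ with $A\subset\overline B\subset U$, whence $P_1^f(A)\le(\alpha+\varepsilon)P_0^f(U)$; outer regularity $P_0^f(A)=\inf\{P_0^f(U):U\text{ open},\ A\subset U\}$ then finishes. The martingale observation you make is correct but plays no role.

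Your duality reduction of (b) to (a) is correct and arguably cleaner than the paper's ``analogous and hence omitted.'' One boundary case you should add to the check: on the set where $f_1(h^{n-1})[\omega_n]=0$ for some $n$ while $f_0>0$ throughout, the sequence $D^t_{f_0}f_1$ is eventually zero, so $\overline D_{f_0}f_1=0<\alpha$ and this set does not meet $A$; thus the only nontrivial boundary case is the one you already handle, where $P_0^f$ vanishes.
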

\vspace{0cm}

\begin{lem}
\label{Lemma: D exists and finite}For all $f,$ $D_{f_{0}}f_{1}$
exists and is finite $P_{0}^{f}$ - a.e. 
\end{lem}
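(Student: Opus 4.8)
The statement $D_{f_0}f_1$ exists and is finite $P_0^f$-a.e.\ is a martingale-convergence statement in disguise, so the plan is to identify the right martingale, apply the martingale convergence theorem, and then separately rule out the ``$=+\infty$'' branch on a $P_0^f$-null set.

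First I would set up the sequence of partial likelihood ratios. Fix an ordered pair $f=(f_0,f_1)$ and work on the restricted space where $P_0^f$ lives. On the cylinder $\sigma$-algebras $g_t$ the function $\omega\mapsto D_{f_0}^t f_1(\omega)=\prod_{n=1}^t \frac{f_1(h^{n-1})[\omega_n]}{f_0(h^{n-1})[\omega_n]}$ is $g_t$-measurable, and a direct one-step computation using~\eqref{eq:1 P(C_wt) =00003D00003D00003D00003D multiplef_i} shows that $\mathbb{E}_{P_0^f}\!\big[D_{f_0}^{t+1}f_1 \mid g_t\big] = D_{f_0}^t f_1$ on the set where the denominators through time $t$ are positive (and that set has full $P_0^f$-measure, since $P_0^f$ assigns zero mass to any cylinder on which some $f_0(h^{n-1})[\omega_n]=0$). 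Hence $\big(D_{f_0}^t f_1, g_t\big)_{t\ge 1}$ is a nonnegative $P_0^f$-martingale. Care is needed to handle the event where some $f_0$-forecast vanishes: this event is exactly where the limit functions are declared to be $+\infty$ by fiat, but it is $P_0^f$-null, so it does not affect an a.e.\ statement.

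Next I would invoke the martingale convergence theorem: a nonnegative martingale converges $P_0^f$-almost surely to an integrable (in particular finite) limit. This gives that $\lim_{t\to\infty} D_{f_0}^t f_1(\omega)$ exists in $[0,\infty)$ for $P_0^f$-a.e.\ $\omega$. On that full-measure set the $\liminf$ and $\limsup$ coincide and are finite, so by the very definition of the derivative in the text (the case $\overline{D}_{f_0}f_1(\omega)=\underline{D}_{f_0}f_1(\omega)<\infty$), $D_{f_0}f_1(\omega)$ exists and is finite there. Finally I would note that the leftover set — where either the limsup/liminf disagree, or the value is $+\infty$, or some $f_0$-forecast vanishes — is contained in a $P_0^f$-null set, which completes the proof. (Alternatively, one can cite Lemma~\ref{Lemma:liminf=00003Dlimsup}: applying part (a) with $A=\{\underline{D}_{f_0}f_1\le \alpha\}^c$-type sets and part (b) symmetrically pins down $\{\underline{D}_{f_0}f_1<\overline{D}_{f_0}f_1\}$ and $\{\overline{D}_{f_0}f_1=\infty\}$ as $P_0^f$-null, giving the same conclusion without explicitly naming the martingale.)

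**Main obstacle.** The only genuinely delicate point is the bookkeeping around cylinders on which $f_0$ assigns probability zero to the realized next outcome: there the ratio $f_1/f_0$ is $0/0$ or $c/0$, the martingale identity must be stated conditionally on positivity, and the convention $D=+\infty$ kicks in. I expect the bulk of the care (though not difficulty) to be in verifying that this bad event is $P_0^f$-null and that, off it, the conditional-expectation computation is exactly telescoping — everything else is a direct appeal to martingale convergence or to Lemma~\ref{Lemma:liminf=00003Dlimsup}.
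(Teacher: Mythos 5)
Your proposal is correct in substance but takes a genuinely different route from the paper's. The paper proves the lemma entirely from Lemma~\ref{Lemma:liminf=00003Dlimsup}: part (b) applied to $I=\{\omega\,|\,\overline{D}_{f_{0}}f_{1}(\omega)=\infty\}$ gives $\alpha P_{0}^{f}(I)\le P_{1}^{f}(I)\le 1$ for every $\alpha>0$, so $P_{0}^{f}(I)=0$; then for $0<a<b$ the set $R(a,b)=\{\omega\,|\,\underline{D}_{f_{0}}f_{1}(\omega)<a<b<\overline{D}_{f_{0}}f_{1}(\omega)<\infty\}$ is sandwiched by applying both parts of the lemma, $bP_{0}^{f}(R(a,b))\le P_{1}^{f}(R(a,b))\le aP_{0}^{f}(R(a,b))$, forcing $P_{0}^{f}(R(a,b))=0$, and a countable union over rational $a<b$ finishes. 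This is essentially the ``alternative'' you sketch in your closing parenthesis, so you have in effect identified the paper's proof as a fallback. Your primary route --- almost-sure convergence of the likelihood-ratio process --- is the classical one and works, with one correction: under $P_{0}^{f}$ the process $D_{f_{0}}^{t}f_{1}$ is in general only a nonnegative \emph{supermartingale}, not a martingale, since the one-step computation gives $E^{P_{0}^{f}}\bigl[f_{1}(h^{t})[\omega_{t+1}]/f_{0}(h^{t})[\omega_{t+1}]\,\big|\,g_{t}\bigr]=\sum_{x:\,f_{0}(h^{t})[x]>0}f_{1}(h^{t})[x]$, which is strictly less than $1$ whenever $f_{1}$ puts mass on an outcome to which $f_{0}$ assigns zero probability. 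Doob's convergence theorem for nonnegative supermartingales still yields an a.s.\ finite limit, so your conclusion stands. As for what each approach buys: yours is shorter but imports a black-box convergence theorem and requires the (super)martingale verification along play paths; the paper's is self-contained given the covering Lemma~\ref{Lemma:liminf=00003Dlimsup}, which it must establish anyway for the proof of Theorem~\ref{Theorem 1 D is error free and reasonable}.
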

The proof of Lemmas \ref{Lemma:liminf=00003Dlimsup} and \ref{Lemma: D exists and finite}
are relegated to  Appendix \ref{Appendix A sec:Missing-Proofs}.

\subsection{{\normalsize{}{}{}{}{}{}${\cal D}$ is error-free and reasonable}}

Now that we have established the existence and the finiteness of the
test ${\cal D},$ let us prove it complies with the two central properties
for comparison tests: 
\begin{thm}
\label{Theorem 1 D is error free and reasonable}The derivative test,
${\cal D}$, is a reasonable and error-free test. 
\end{thm}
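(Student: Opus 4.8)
The plan is to verify the two properties separately, each time reducing to Lemma \ref{Lemma:liminf=00003Dlimsup} and Lemma \ref{Lemma: D exists and finite}. By anonymity of $\mathcal{D}$ it suffices in each case to treat one of the two symmetric sub-cases.

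\emph{Error-free.} Fix $f$ and, by anonymity, it is enough to show $P_{0}^{f}(\{\mathcal{D}(\cdot,f)=1\})=0$. By definition of the test, $\{\mathcal{D}(\cdot,f)=1\}=\{\omega\mid D_{f_{1}}f_{0}(\omega)=0\}$, and on this set the reciprocal ratio $D_{f_{0}}f_{1}$ diverges to $+\infty$; in particular $\overline{D}_{f_{0}}f_{1}(\omega)=+\infty$ there. First I would intersect with the sets $A_{n}=\{\omega\mid \overline{D}_{f_{0}}f_{1}(\omega)\geq n\}$ and apply part (b) of Lemma \ref{Lemma:liminf=00003Dlimsup} to get $P_{1}^{f}(A_{n})\geq n\,P_{0}^{f}(A_{n})$, equivalently $P_{0}^{f}(A_{n})\leq \tfrac{1}{n}P_{1}^{f}(A_{n})\leq \tfrac1n$. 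Since $\{\mathcal{D}(\cdot,f)=1\}\subset A_{n}$ for every $n$, letting $n\to\infty$ gives $P_{0}^{f}(\{\mathcal{D}(\cdot,f)=1\})=0$, as required. (One should be slightly careful about the branch of $\overline{D}_{f_{0}}f_{1}$ where $f_{0}$ assigns probability zero to a realized outcome while $f_{1}$ does not — there $\overline{D}_{f_{0}}f_{1}=+\infty$ by fiat and $P_{0}^{f}$-measure zero by \eqref{eq:1 P(C_wt) =00003D00003D00003D00003D multiplef_i}, so it is harmless.)

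\emph{Reasonable.} Fix $f$, a measurable set $A$, and by anonymity assume $P_{0}^{f}(A)>0$ and $P_{1}^{f}(A)=0$; I must show $P_{0}^{f}(A\cap\{\mathcal{D}(\cdot,f)=0\})>0$, and $\{\mathcal{D}(\cdot,f)=0\}=\{\omega\mid D_{f_{0}}f_{1}(\omega)=0\}$. The idea is that if the derivative $D_{f_{0}}f_{1}$ were bounded away from $0$ on (most of) $A$, then $P_{0}^{f}$ and $P_{1}^{f}$ would be comparable on that part of $A$, contradicting $P_{1}^{f}(A)=0$ together with $P_{0}^{f}(A)>0$. Concretely, by Lemma \ref{Lemma: D exists and finite} the derivative $D_{f_{0}}f_{1}$ exists and is finite $P_{0}^{f}$-a.e., so up to a $P_{0}^{f}$-null set $A$ is the disjoint union of $A_{0}:=A\cap\{D_{f_{0}}f_{1}=0\}$ and $A_{+}:=A\cap\{0<D_{f_{0}}f_{1}<\infty\}$. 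On $A_{+}=\bigcup_{k}A\cap\{D_{f_{0}}f_{1}\geq \tfrac1k\}$, part (b) of Lemma \ref{Lemma:liminf=00003Dlimsup} (applied to $\overline{D}_{f_{0}}f_{1}=D_{f_{0}}f_{1}\geq\tfrac1k$) gives $0=P_{1}^{f}(A\cap\{D_{f_{0}}f_{1}\geq\tfrac1k\})\geq \tfrac1k P_{0}^{f}(A\cap\{D_{f_{0}}f_{1}\geq\tfrac1k\})$, hence $P_{0}^{f}(A_{+})=0$. Therefore $P_{0}^{f}(A_{0})=P_{0}^{f}(A)>0$, and since $A_{0}\subset A\cap\{\mathcal{D}(\cdot,f)=0\}$ this is exactly the conclusion.

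\emph{Main obstacle.} The substantive content is entirely carried by Lemmas \ref{Lemma:liminf=00003Dlimsup} and \ref{Lemma: D exists and finite}, which are granted; the only real care needed here is bookkeeping — matching the test's three-way verdict to the $\limsup$/$\liminf$ functions, handling the $+\infty$ convention on the zero-probability branch, and using countable unions over rational thresholds so that ``$=0$'' and ``$=+\infty$'' are reached as limits of the finite-$\alpha$ estimates in Lemma \ref{Lemma:liminf=00003Dlimsup}. I do not anticipate any genuine difficulty beyond that.
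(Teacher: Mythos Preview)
Your proposal is correct and essentially identical to the paper's proof. The only cosmetic difference is in the error-free part: the paper shows $\{\mathcal{D}(\cdot,f)=1\}\subset\{\overline{D}_{f_0}f_1=\underline{D}_{f_0}f_1=\infty\}$ and then invokes Lemma~\ref{Lemma: D exists and finite} (finiteness $P_0^f$-a.e.) directly, whereas you bypass Lemma~\ref{Lemma: D exists and finite} and bound $P_0^f(\{\overline{D}_{f_0}f_1\geq n\})\leq\tfrac1n$ via Lemma~\ref{Lemma:liminf=00003Dlimsup}(b) --- but that is precisely how the relevant piece of Lemma~\ref{Lemma: D exists and finite} is proved in the appendix, so the arguments coincide.
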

\begin{proof}
\noindent \textbf{Part 1 - ${\cal D}$ is reasonable:} Let $A$ be
a measurable set and assume (w.l.o.g) that 
\begin{equation}
P_{0}^{f}(A)>0\text{ and }P_{1}^{f}(A)=0.\label{eq:15}
\end{equation}
For $a>0$ let us denote $R_{a}\coloneqq A\cap\{\omega|\ 0<a\leq D_{f_{0}}f_{1}(\omega)<\infty\}.$
Note that if $P_{0}^{f}(R_{a})>0$ then applying part $b$ of Lemma
\ref{Lemma:liminf=00003Dlimsup} yields

\noindent 
\[
P_{1}^{f}(R_{a})\geq aP_{0}^{f}(R_{a})>0
\]

\noindent which contradicts $\eqref{eq:15}.$ Therefore,


\[
P_{0}^{f}(A\cap\{\omega|\ 0<D_{f_{0}}f_{1}(\omega)<\infty\})=P_{0}^{f}(\underset{\underset{a\in\mathbb{Q}}{0<a}}{\bigcup}R_{a})\leq\underset{\underset{a\in\mathbb{Q}}{0<a}}{\sum}P_{0}^{f}(R_{a})=0.
\]

\noindent Since, by Lemma \ref{Lemma: D exists and finite}, $D_{f_{0}}f_{1}$
exists and is finite $P_{0}^{f}-a.e.$, we conclude that

\[
P_{0}^{f}(A\cap\{\omega|\ D_{f_{0}}f_{1}(\omega)=0\}^{c})=0.
\]

\noindent Hence,

\begin{equation}
0<P_{0}^{f}(A)=P_{0}^{f}(A\cap\{\omega|\ D_{f_{0}}f_{1}(\omega)=0\})=P_{0}^{f}(A\cap\{{\cal D}(\cdot,f)=0\}),\label{eq:P(A intersect A_T0)=00003DP(A)}
\end{equation}

\noindent where the right-most equality follows from \eqref{eq:the test}.
Inequality \eqref{eq:P(A intersect A_T0)=00003DP(A)} implies that
the test ${\cal D}$ is reasonable.

\textbf{Part 2 - ${\cal D}$ is error-free:} Note (w.l.o.g) that

\[
\begin{array}{l}
\{{\cal D}(\cdot,f)=1\}\\
\\
=\{\omega|\ \underset{t\rightarrow\infty}{lim}\,D_{f_{1}}^{t}f_{0}(\omega)=0\text{ and }f_{1}(h^{n-1})[\omega_{n}]>0\text{ for all }n\geq1\}\\
\\
\subset\{\omega|\ \underset{t\rightarrow\infty}{lim}\,D_{f_{0}}^{t}f_{1}(\omega)=\infty\}\cup\{\omega|\ f_{0}(h^{n-1})[\omega_{n}]=0\text{ for some \ensuremath{n}}\}\\
\\
\subset\{\omega|\ \underline{D}{}_{f_{0}}f_{1}(\omega)=\overline{D}_{f_{0}}f_{1}(\omega)=\infty\}.
\end{array}
\]

\noindent By Lemma \ref{Lemma: D exists and finite}, $D_{f_{0}}f_{1}$
is finite $P_{0}^{f}-a.e.;$ thus

\[
P_{0}^{f}(\{{\cal D}(\cdot,f)=1\})\leq P_{0}^{f}(\{\omega|\ \underline{D}{}_{f_{0}}f_{1}(\omega)=\overline{D}_{f_{0}}f_{1}(\omega)=\infty\})=0,
\]

\noindent and ${\cal D}$ is error-free.
\end{proof}
\vspace{0cm}

\begin{rem}
The test ${\cal D}$ and its key properties can be usefully viewed
as an implication of the Lebesgue decomposition (\Citet*{Billingsley},
Section 31). A standard decomposition usually involves a decomposition
of one measure with respect to another into a singular part and an
absolutely continuous part. Here, it is applied in both directions
in such a way that allows some flexibility on how measure-zero sets
are handled. Given a pair of forecasting strategies $f,$ we decompose
the set $\Omega^{\infty}$ into three sets: $\{{\cal D}(\cdot,f)=1\}$
which corresponds to expert $1$'s induced measure $P_{1}^{f},$ $\{{\cal D}(\cdot,f)=0\}$
which corresponds to expert $0$'s induced measure $P_{0}^{f},$ and
$\{{\cal D}(\cdot,f)=\frac{1}{2}\}$ where the measures are mutually
absolutely continuous. The outcome of the test is found accordingly.
\end{rem}

\subsection{{\normalsize{}The uniqueness of ${\cal D}$}}

Although there may be other error-free and reasonable comparison tests
they are essentially equivalent to the derivative test. To capture
this idea we introduce the following equivalence relation over tests:
\begin{defn}
\label{Def: different test }We say that {\em the test $T$ is equivalent
to the test $\hat{T}$ with respect to the pair of forecasters $f,$}
denoted $T\sim_{f}\hat{T}$, if and only if for all $i\in\{0,1\},$
\[
P_{i}^{f}(\{\omega|\;T(\omega,f_{0},f_{1})\neq\hat{T}(\omega,f_{0},f_{1})\})=0.
\]
{\em $T$ is equivalent to the test $\hat{T}$}, denoted {\em
$T\sim\hat{T}$}, if and only if $T$ is equivalent to the test $\hat{T}$
with respect to any pair of forecasters.
\end{defn}
\vspace{0cm}

\begin{prop}
\label{prop:equivalence}The relation $\sim$ is an equivalence relation
over the set of all comparison tests.
\end{prop}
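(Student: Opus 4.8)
The plan is to verify the three defining properties of an equivalence relation — reflexivity, symmetry and transitivity — directly from Definition \ref{Def: different test }. First I would record the bookkeeping simplification: for each fixed pair of forecasters $f$, Definition \ref{Def: different test } specifies a relation $\sim_{f}$ on comparison tests, and $\sim$ is exactly the intersection $\bigcap_{f}\sim_{f}$ over all $f\in F\times F$; since an arbitrary intersection of equivalence relations is again an equivalence relation, it suffices to show each $\sim_{f}$ is one. To do so I would fix $f$ and work with the ``disagreement sets'' $\{T(\cdot,f)\neq\hat T(\cdot,f)\}\coloneqq\{\omega\mid T(\omega,f_{0},f_{1})\neq\hat T(\omega,f_{0},f_{1})\}$, first noting that these are measurable: comparison tests are by definition measurable maps into the finite set $\{0,\tfrac12,1\}$, so such a set is a finite union of intersections of preimages of single points, and hence the quantity $P_{i}^{f}(\{T(\cdot,f)\neq\hat T(\cdot,f)\})$ appearing in Definition \ref{Def: different test } is well defined.

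Then I would carry out the three checks, all of which are routine. For reflexivity, $\{T(\cdot,f)\neq T(\cdot,f)\}=\emptyset$, which is $P_{i}^{f}$-null for $i\in\{0,1\}$, so $T\sim_{f}T$. For symmetry, as subsets of $\Omega^{\infty}$ one has $\{T(\cdot,f)\neq\hat T(\cdot,f)\}=\{\hat T(\cdot,f)\neq T(\cdot,f)\}$, so one is $P_{i}^{f}$-null precisely when the other is, giving $T\sim_{f}\hat T\iff\hat T\sim_{f}T$. For transitivity, assuming $T\sim_{f}\hat T$ and $\hat T\sim_{f}\tilde T$ and fixing $i\in\{0,1\}$, the key step is the pointwise observation that if $T$ and $\tilde T$ disagree at some $\omega$ then the common value $\hat T(\omega,f_{0},f_{1})$ must differ from at least one of the two values, which yields the inclusion
\[
\{T(\cdot,f)\neq\tilde T(\cdot,f)\}\subseteq\{T(\cdot,f)\neq\hat T(\cdot,f)\}\cup\{\hat T(\cdot,f)\neq\tilde T(\cdot,f)\};
\]
monotonicity and finite subadditivity of $P_{i}^{f}$ then force the left-hand set to be $P_{i}^{f}$-null, so $T\sim_{f}\tilde T$. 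Assembling the three checks shows each $\sim_{f}$, and therefore $\sim$, is an equivalence relation.

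There is no serious obstacle in this argument; the statement is essentially a formality. If anything, the single point I would take care to spell out rather than skip is the measurability of the disagreement sets, since the probabilities in Definition \ref{Def: different test } are meaningful only once that is granted.
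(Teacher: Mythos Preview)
Your proof is correct and follows essentially the same direct verification of reflexivity, symmetry and transitivity as the paper; the only cosmetic differences are that you package the argument via the intersection $\bigcap_{f}\sim_{f}$ and phrase transitivity in terms of subadditivity on the disagreement sets, whereas the paper takes complements and intersects the agreement sets. Your explicit remark on measurability of the disagreement sets is a small addition the paper leaves implicit.
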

The proof of Proposition \ref{prop:equivalence} is relegated to Appendix
\ref{Appendix A sec:Missing-Proofs}. To establish the theorem about
the essential uniqueness of the derivative test we will consider an
arbitrary anonymous, non-counterfactual, reasonable test, $T$, that
is not equivalent to ${\cal D}$. We will then argue that $T$ cannot
be error-free. We will do so by constructing a pair of forecasting
strategies for which the error-free condition fails.\footnote{In fact we show a much stronger result; Theorem \ref{thm:main Reasonable implies not error-free}
asserts that $T$ admits an error with respect to \textbf{any} pair
of forecasting strategies for which $T$ is not equivalent to ${\cal D}$.}
\begin{thm}
\label{thm:main Reasonable implies not error-free}Let $T$ be an
anonymous, non-counterfactual, reasonable test. If $T\nsim{\cal D}$
then $T$ is not error-free.
\end{thm}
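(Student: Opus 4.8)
The plan is to refute the error-free condition directly. Since $T\nsim{\cal D}$, Definition \ref{Def: different test } supplies a pair $f=(f_{0},f_{1})$ and an index $i\in\{0,1\}$ with $P_{i}^{f}(\{T(\cdot,f)\neq{\cal D}(\cdot,f)\})>0$. Because both $T$ and ${\cal D}$ are anonymous, the set $\{T(\cdot,f)\neq{\cal D}(\cdot,f)\}$ is unchanged when the two forecasters are swapped (while $P_{0}^{f}$ and $P_{1}^{f}$ are interchanged), so I may assume $i=0$; write $S=\{T(\cdot,f)\neq{\cal D}(\cdot,f)\}$, so $P_{0}^{f}(S)>0$. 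By Lemma \ref{Lemma: D exists and finite}, $D_{f_{0}}f_{1}$ exists and is finite $P_{0}^{f}$-a.e., hence ${\cal D}(\cdot,f)\in\{0,\frac{1}{2}\}$ $P_{0}^{f}$-a.e.; therefore, up to a $P_{0}^{f}$-null set, $S=S_{0}\cup S'$ with $S_{0}=S\cap\{{\cal D}(\cdot,f)=0\}$ and $S'=S\cap\{{\cal D}(\cdot,f)=\frac{1}{2}\}$, and at least one of $P_{0}^{f}(S_{0})$, $P_{0}^{f}(S')$ is positive.

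The first step is to eliminate $S_{0}$ using reasonableness. On $\{{\cal D}(\cdot,f)=0\}=\{\omega\mid D_{f_{0}}f_{1}(\omega)=0\}$ one has $\underline{D}_{f_{0}}f_{1}\leq\alpha$ for every $\alpha>0$, so part $(a)$ of Lemma \ref{Lemma:liminf=00003Dlimsup} gives $P_{1}^{f}(\{{\cal D}(\cdot,f)=0\})\leq\alpha$ for all $\alpha>0$, i.e.\ $P_{1}^{f}(S_{0})=0$. If $P_{0}^{f}(S_{0})>0$, then applying Definition \ref{Def resonable} with $A=S_{0}$ and $i=0$ forces $P_{0}^{f}(S_{0}\cap\{T(\cdot,f)=0\})>0$; but $S_{0}\subseteq\{T(\cdot,f)\neq{\cal D}(\cdot,f)=0\}$, so that intersection is empty --- a contradiction. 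Hence $P_{0}^{f}(S_{0})=0$, and consequently $P_{0}^{f}(S')>0$.

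The second step is to extract an error from $S'$. On $S'$, $T(\cdot,f)\neq\frac{1}{2}$, so $T(\cdot,f)\in\{0,1\}$ there, and $P_{0}^{f}$-a.e.\ on $S'$ we have $0<D_{f_{0}}f_{1}<\infty$ (the derivative exists and is finite by Lemma \ref{Lemma: D exists and finite}, and it is nonzero since we are in the ``other'' branch of ${\cal D}$). Setting $S'_{n}=S'\cap\{D_{f_{0}}f_{1}\geq1/n\}$, we have $S'=\bigcup_{n\geq1}S'_{n}$ up to a $P_{0}^{f}$-null set, so we may choose $n$ with $P_{0}^{f}(S'_{n})>0$. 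Partitioning $S'_{n}$ by $\{T(\cdot,f)=0\}$ and $\{T(\cdot,f)=1\}$, one of the two pieces has positive $P_{0}^{f}$-measure. If $P_{0}^{f}(S'_{n}\cap\{T(\cdot,f)=1\})>0$, then $P_{0}^{f}(\{T(\cdot,f)=1\})>0$, violating the error-free condition (Definition \ref{Def error-free} with $i=1$). Otherwise $P_{0}^{f}(S'_{n}\cap\{T(\cdot,f)=0\})>0$, and since $S'_{n}\subseteq\{\overline{D}_{f_{0}}f_{1}\geq1/n\}$, part $(b)$ of Lemma \ref{Lemma:liminf=00003Dlimsup} gives $P_{1}^{f}(S'_{n}\cap\{T(\cdot,f)=0\})\geq\frac{1}{n}P_{0}^{f}(S'_{n}\cap\{T(\cdot,f)=0\})>0$, so $P_{1}^{f}(\{T(\cdot,f)=0\})>0$, again violating the error-free condition (Definition \ref{Def error-free} with $i=0$). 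In either case $T$ is not error-free --- in fact it errs already on the pair $f$ on which it disagrees with ${\cal D}$.

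The step I expect to be the crux is the first one: the role of reasonableness is precisely to force $T$ to agree with ${\cal D}$ on the ``decisive'' part of the space --- the region where one of the two induced measures is singular with respect to the other --- so that any disagreement between $T$ and ${\cal D}$ must be confined to the set where the two measures are mutually absolutely continuous. The second step then only has to observe that on such a set every positive-$P_{0}^{f}$-measure subset also has positive $P_{1}^{f}$-measure and vice versa (this is the content of the two-sided bounds in Lemma \ref{Lemma:liminf=00003Dlimsup}), so that a verdict different from ${\cal D}$'s value $\frac{1}{2}$ is necessarily mistaken under one of the two measures. The remaining ingredients --- the anonymity reduction to $i=0$, the $P_{0}^{f}$-a.e.\ identification of ${\cal D}$ with a function of the derivative, and the truncation to $\{D_{f_{0}}f_{1}\geq1/n\}$ required to invoke Lemma \ref{Lemma:liminf=00003Dlimsup} with a fixed constant --- are routine.
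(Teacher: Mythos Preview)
Your proof is correct and follows essentially the same approach as the paper's: both decompose the disagreement set according to the value of ${\cal D}$, use reasonableness of $T$ on $\{{\cal D}=0\}$, and use (the content of) reasonableness of ${\cal D}$ on $\{{\cal D}=\tfrac12\}$. The only cosmetic differences are that the paper argues by contradiction---assuming $T$ error-free so that $\{T=1\}$ is discarded at the outset---and cites Theorem~\ref{Theorem 1 D is error free and reasonable} as a black box, whereas you argue directly, keep the $\{T=1\}$ sub-case explicit, and invoke Lemmas~\ref{Lemma:liminf=00003Dlimsup} and~\ref{Lemma: D exists and finite} to reproduce the needed pieces of that theorem.
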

\begin{proof}
Assume by contradiction that $T$ is error-free. Let $f$ be such
that $T\nsim_{f}{\cal D}$, then (w.l.o.g.)  $\exists k,l(\neq k)\in\{0,\frac{1}{2},1\}$
such that

\[
P_{0}^{f}(\{T(\cdot,f)=l\}\cap\{{\cal D}(\cdot,f)=k\})>0.
\]

\noindent In addition, by Part 2 of Theorem \ref{Theorem 1 D is error free and reasonable},
${\cal D}$ is error-free; therefore 

\[
P_{0}^{f}(\{T(\cdot,f)=1\})=P_{0}^{f}(\{{\cal D}(\cdot,f)=1\})=0
\]

\noindent and consequently,

\[
P_{0}^{f}(A_{1}\coloneqq\{T(\cdot,f)=0\}\cap\{{\cal D}(\cdot,f)=\frac{1}{2}\})>0\ \text{ or }\;P_{0}^{f}(A_{2}\coloneqq\{T(\cdot,f)=\frac{1}{2}\}\cap\{{\cal D}(\cdot,f)=0\})>0.
\]

Case 1$\colon$ $P_{0}^{f}(A_{1})>0$. By Part 1 of Theorem \ref{Theorem 1 D is error free and reasonable},
${\cal D}$ is reasonable; thus

\[
P_{1}^{f}(A_{1})=0\Longrightarrow P_{0}^{f}(A_{1}\cap\{{\cal D}(\cdot,f)=0\})>0
\]

\noindent which leads to a contradiction, since $\{{\cal D}(\cdot,f)=0\},\{{\cal D}(\cdot,f)=\frac{1}{2}\}$
are disjoint. Thus

\noindent 
\[
P_{1}^{f}(\{T(\cdot,f)=0\})>0
\]
which contradicts the assumption that $T$ is error-free.

Case 2$\colon$ $P_{0}^{f}(A_{2})>0.$ By the assumption, $T$ is
a reasonable test where, by Part 2 of Theorem \ref{Lemma:liminf=00003Dlimsup},
${\cal D}$ is error-free; therefore the contradiction

\[
P_{1}^{f}(\{{\cal D}(\cdot,f)=0\})>0
\]

\noindent follows analogously from Case 1.
\end{proof}

\section{{\normalsize{}\label{subsec:Tail-Test} Tail tests}}

In the introduction, we state our intention to study tests in which
decisions are made for the distant future. In this section we take
this a step further and consider tests which not only enable decisions
to be made for the distant future, but also \textbf{only} for the
distant future.

The motivation for this is that a tester must allow the two forecasters
(some time) to accumulate data so they can calibrate their model.
A forecaster may have a very good parametric model in mind but can
only calibrate the values of the parameters by observing enough data.
A test that allows for such an initial calibration stage is called
a {\em tail test}. Formally,
\begin{defn}
The pair of triplets, $(\omega,f_{0},f_{1}),(\tilde{\omega},\tilde{f_{0}},\tilde{f})\in\Omega^{\infty}\times F\times F,$
{\em eventually coincide} if there exists $n>1$ such that for all
$1\leq t\leq n-1,\ i\in\{0,1\},$

\begin{equation}
h_{n}=\tilde{h}_{n}\text{ and }f_{i}(h^{t-1})[\omega_{t}]>0,\tilde{f_{i}}(\tilde{h}^{t-1})[\tilde{\omega}_{t}]>0\label{eq:16-1 tail condition}
\end{equation}

\noindent (where $\tilde{h}_{n}\coloneqq h_{n}(\tilde{\omega},\tilde{f_{0}},\tilde{f_{1}}),\ \tilde{h}^{t-1}\coloneqq h^{t-1}(\tilde{\omega},\tilde{f_{0}},\tilde{f_{1}})$).
\end{defn}
\vspace{0cm}

In words, the two play paths agree from some time on whenever the
prefix has mutually positive probability.
\begin{defn}
\label{Def tail test}$T$ is a {\em tail test} if whenever a pair
of triplets, $(\omega,f_{0},f_{1}),(\tilde{\omega},\tilde{f_{0}},\tilde{f})\in\Omega^{\infty}\times F\times F$,
eventually coincide, then $T(\omega,f_{0},f_{1})=T(\tilde{\omega},\tilde{f_{0}},\tilde{f_{1}}).$

In layman's terms, a tail test ignores the prefix of the sequence
and makes the comparison between the two experts based on the suffix
of forecasts and realizations.
\end{defn}
It turns out that the derivative test also conforms with the tail
property:
\begin{thm}
${\cal D}$ is a tail test.
\end{thm}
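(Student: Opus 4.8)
The plan is to show that the verdict $\mathcal{D}(\omega,f_0,f_1)$ depends only on the tail behaviour of the induced play path, which is precisely how the derivative functions $\overline{D}_{f_0}f_1$, $\underline{D}_{f_0}f_1$ (and symmetrically with the roles of $0$ and $1$ swapped) were defined. Suppose $(\omega,f_0,f_1)$ and $(\tilde\omega,\tilde f_0,\tilde f_1)$ eventually coincide, so there is $n>1$ with $h_n = \tilde h_n$ and with all the one-step forecasts along both prefixes strictly positive (condition \eqref{eq:16-1 tail condition}). First I would unwind the definition of $D^t_{f_0}f_1(\omega)$ as the finite product $\prod_{m=1}^t \frac{f_1(h^{m-1})[\omega_m]}{f_0(h^{m-1})[\omega_m]}$ and split it, for $t\ge n-1$, into the prefix factor $D^{n-1}_{f_0}f_1(\omega)$ times a tail factor indexed by $m\ge n$. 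Because the suffixes agree ($h_m = \tilde h_m$ for $m\ge n-1$, hence the arguments $h^{m-1}$ and the realized coordinates $\omega_m$ coincide with their tilde counterparts for $m\ge n$), the tail factor for $(\omega,f_0,f_1)$ equals the tail factor for $(\tilde\omega,\tilde f_0,\tilde f_1)$ for every such $t$.

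Next I would observe that the prefix factor $D^{n-1}_{f_0}f_1(\omega)$ is a finite, \emph{strictly positive} real number: it is a finite product of ratios of the form $f_1(h^{m-1})[\omega_m]/f_0(h^{m-1})[\omega_m]$, and the denominators are positive by the second half of \eqref{eq:16-1 tail condition}, while the numerators $f_1(h^{m-1})[\omega_m]$ are also positive by that same condition. (Here one uses that the eventual-coincidence hypothesis asks for positivity of \emph{both} experts' forecasts along both prefixes.) Therefore, taking $\limsup_{t\to\infty}$ and $\liminf_{t\to\infty}$, the constant positive prefix factor can be pulled out, and we get
\[
\overline{D}_{f_0}f_1(\omega) = D^{n-1}_{f_0}f_1(\omega)\cdot \overline{D}_{\tilde f_0}\tilde f_1(\tilde\omega),
\qquad
\underline{D}_{f_0}f_1(\omega) = D^{n-1}_{f_0}f_1(\omega)\cdot \underline{D}_{\tilde f_0}\tilde f_1(\tilde\omega),
\]
where I am abusing notation and letting the tilde-decorated derivative symbols denote the limit functions built from the suffix play path. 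In particular $D_{f_0}f_1(\omega) = 0$ if and only if $D_{\tilde f_0}\tilde f_1(\tilde\omega) = 0$, since multiplying by a positive constant preserves the value $0$ and preserves finiteness. The symmetric statement, with the roles of the two experts interchanged, shows likewise that $D_{f_1}f_0(\omega) = 0$ iff $D_{\tilde f_1}\tilde f_0(\tilde\omega) = 0$. Feeding these two equivalences into the definition \eqref{eq:the test} of $\mathcal{D}$ gives $\mathcal{D}(\omega,f_0,f_1) = \mathcal{D}(\tilde\omega,\tilde f_0,\tilde f_1)$, which is exactly the tail-test property.

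The only mildly delicate point — the ``main obstacle'', such as it is — is the bookkeeping around the $+\infty$ branch in the definitions of $\overline D$, $\underline D$. One must check that the eventual-coincidence hypothesis forces us into the top branch (all forecasts positive) \emph{simultaneously} for the prefix of $(\omega,f_0,f_1)$ and for all of $(\tilde\omega,\tilde f_0,\tilde f_1)$, or into the $+\infty$ branch simultaneously; it cannot happen that one triplet has a vanishing forecast somewhere in its (agreed) suffix while the other does not, precisely because the suffixes are identical play paths. The prefix, by contrast, always has strictly positive forecasts by hypothesis, so it never triggers the $+\infty$ branch. Once this case analysis is recorded, the algebraic identity above and its $0\leftrightarrow 1$ mirror image finish the proof; I would write this up in a few lines without belabouring the finite-product manipulation.
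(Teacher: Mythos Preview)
Your proposal is correct and follows essentially the same approach as the paper: factor the likelihood ratio into a strictly positive prefix factor (guaranteed by the positivity clause in \eqref{eq:16-1 tail condition}) times a tail factor determined by the common suffix, and observe that multiplying by a positive constant preserves whether the derivative exists and equals zero. The paper formalizes the ``common suffix'' by introducing an auxiliary triplet $(\omega'',f_0'',f_1'')$ whose entire play path equals $h_n$, then shows $\mathcal{D}(\omega,f_0,f_1)=\mathcal{D}(\omega'',f_0'',f_1'')=\mathcal{D}(\tilde\omega,\tilde f_0,\tilde f_1)$; you instead argue directly that the two full derivatives differ only by positive multiplicative constants. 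One small presentational caveat: your displayed identity uses $\overline{D}_{\tilde f_0}\tilde f_1(\tilde\omega)$ to mean the suffix limsup, but the conclusion ``$D_{f_0}f_1(\omega)=0$ iff $D_{\tilde f_0}\tilde f_1(\tilde\omega)=0$'' tacitly switches back to the full tilde-derivative; you should either keep a dedicated symbol for the suffix limit, or add the one-line observation that the tilde-triplet's full derivative likewise equals its own positive prefix times that same suffix limit.
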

\begin{proof}
Let $(\omega,f_{0},f_{1}),(\omega',f_{0}',f_{1}')\in\Omega^{\infty}\times F\times F$
be a pair of triplets that eventually coincide for some $n>1$. Let
$(\omega'',f_{0}'',f_{1}'')\in\Omega^{\infty}\times F\times F$ be
a triplet that satisfies

\begin{equation}
h_{1}(\omega'',f_{0}'',f_{1}'')=h_{n}(\omega,f_{0},f_{1}).\label{eq: in the proof T_D is a tail test}
\end{equation}

\noindent Since, by the right part of $\eqref{eq:16-1 tail condition},$
$D_{f_{0}}^{t}f_{1}(\omega)>0$ for all $1\leq t<n-1,$ it follows
from \eqref{eq: in the proof T_D is a tail test} that\footnote{Note that $D_{f_{0}''}f_{1}''(\omega'')=0$ if and only if $D_{f_{0}}f_{1}(\omega'')$
exists and equals $0.$ }

\[
0={\cal D}(\omega'',f_{0}'',f_{1}'')\iff0=D_{f_{0}}^{n-1}f_{1}(\omega)\cdot D_{f_{0}''}f_{1}''(\omega'')=D_{f_{0}}f_{1}(\omega)\iff{\cal D}(\omega,f_{0},f_{1})=0.
\]

\noindent Additionally, by the same consideration we have

\[
1={\cal D}(\omega'',f_{0}'',f_{1}'')\iff0=D_{f_{1}''}f_{0}''(\omega'')\iff{\cal D}(\omega,f_{0},f_{1})=1,
\]

\noindent and therefore

\[
{\cal D}(\omega'',f_{0}'',f_{1}'')={\cal D}(\omega,f_{0},f_{1}).
\]

\noindent Similarly, we show that ${\cal D}(\omega'',f_{0}'',f_{1}'')={\cal D}(\omega',f_{0}',f_{1}')$
by replacing $h_{n}(\omega,f_{0},f_{1})$ with $h_{n}(\omega',f_{0}',f_{1}')$
in \eqref{eq: in the proof T_D is a tail test} and this concludes
the proof. 
\end{proof}
To establish that ${\cal D}$ is unique among all reasonable error-free
tests, we showed that for an arbitrary non-equivalent yet reasonable
test there must be some error. What we have shown is that there is
a pair of experts where one expert will assign a positive probability
to the test pointing at the other expert as more informative. That
probability, the error probability, although positive is possibly
very small. It turns out that if we restrict the discussion to tail
tests, the uniqueness of ${\cal D}$ comes in a stronger form, as
the error probability can be made arbitrarily close to one. In other
words, for an arbitrary non-equivalent reasonable \textbf{tail} test
and $0<\epsilon<1$, there exists a pair of experts for which one
expert assigns a probability of $1-\epsilon$ to the other expert
being deemed more informative. 

Before we state this theorem we require the following lemma. 
\begin{lem}
\label{Lemma  if T reasonable then P1(A intersect AT0)}If $T$ is
reasonable then for all $f$ and $i\in\{0,1\},k\neq i,$ and for all
measurable set $A,$
\end{lem}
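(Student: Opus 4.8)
The plan is to read the lemma off the definition of a reasonable test (Definition~\ref{Def resonable}); the one idea needed is to apply that definition to a well‑chosen subset of $A$ rather than to $A$ itself. Concretely, I would fix $f$, $i\in\{0,1\}$, $k\neq i$, and a measurable set $A$ satisfying the stated hypothesis (namely $P_{1-i}^{f}(A)=0$, together with $P_{i}^{f}(A)>0$ if that is assumed as well), and set $B:=A\cap\{T(\cdot,f)=k\}$. Since $B\subseteq A$ we still have $P_{1-i}^{f}(B)=0$, so $B$ continues to meet the ``null for the other expert'' part of the antecedent of \eqref{eq:reasonableness condition}.

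Next I would argue by contradiction: suppose $P_{i}^{f}(B)>0$. Then $B$ satisfies both conditions in the antecedent of \eqref{eq:reasonableness condition} with expert index $i$, so reasonableness applied to the set $B$ gives $P_{i}^{f}\big(B\cap\{T(\cdot,f)=i\}\big)>0$. But $T$ is a function, so for $k\neq i$ the events $\{T(\cdot,f)=k\}$ and $\{T(\cdot,f)=i\}$ are disjoint; since $B\subseteq\{T(\cdot,f)=k\}$ this forces $B\cap\{T(\cdot,f)=i\}=\emptyset$ and hence $P_{i}^{f}\big(B\cap\{T(\cdot,f)=i\}\big)=0$, a contradiction. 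Therefore $P_{i}^{f}(B)=0$, that is $P_{i}^{f}\big(A\cap\{T(\cdot,f)=k\}\big)=0$, which is the claim. If the lemma is phrased in the aggregated form that $T=i$ holds $P_{i}^{f}$‑almost everywhere on $A$ (equivalently $P_{i}^{f}(A\cap\{T(\cdot,f)=i\})=P_{i}^{f}(A)$), I would then sum the displayed equality over the at most two values $k\in\{0,\frac{1}{2},1\}\setminus\{i\}$, using $\{T(\cdot,f)=i\}^{c}=\bigcup_{k\neq i}\{T(\cdot,f)=k\}$.

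I do not expect a genuine obstacle here: the entire content is the choice $B=A\cap\{T(\cdot,f)=k\}$, after which the mutual exclusivity of the three admissible verdicts closes the argument in one line. In particular none of anonymity, non‑counterfactuality, or the tail property is needed; the lemma is just a sharpening of reasonableness — ``on a set that one expert regards as null, the test cannot favour that expert, as seen by the other expert'' — and it is this sharpened form that will later feed the strong‑uniqueness statement for tail tests.
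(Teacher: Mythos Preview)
Your core argument is correct and is exactly the paper's proof: set $B=A\cap\{T(\cdot,f)=k\}$, assume $P_{i}^{f}(B)>0$ while $P_{1-i}^{f}(B)=0$, apply reasonableness to $B$ to get $P_{i}^{f}(B\cap\{T(\cdot,f)=i\})>0$, and contradict the disjointness of $\{T=k\}$ and $\{T=i\}$. That is precisely what the paper does (with $i=1$ taken w.l.o.g.).

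The only discrepancy is in how you read the statement. The lemma's actual conclusion (which sits outside the \texttt{lem} environment in the source) is the implication
\[
P_{i}^{f}\big(A\cap\{T(\cdot,f)=k\}\big)>0\ \Longrightarrow\ P_{1-i}^{f}\big(A\cap\{T(\cdot,f)=k\}\big)>0,
\]
with no separate hypothesis $P_{1-i}^{f}(A)=0$ on $A$ itself. You guessed the hypothesis to be $P_{1-i}^{f}(A)=0$ and then deduced $P_{1-i}^{f}(B)=0$ from $B\subseteq A$; in fact $P_{1-i}^{f}(B)=0$ is simply the negation of the conclusion, assumed for contradiction. Since your argument only ever uses $P_{1-i}^{f}(B)=0$, it already establishes the lemma as stated; the extra assumption on $A$ is unnecessary and your ``aggregated form'' paragraph is not needed.
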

\[
P_{i}^{f}(A\cap\{T(\cdot,f)=k\})>0\implies P_{1-i}^{f}(A\cap\{T(\cdot,f)=k\})>0.
\]

\begin{proof}
Let $A$ be a measurable set and (w.l.o.g) assume by contradiction
that

\[
P_{1}^{f}(A\cap\{T(\cdot,f)=k\})>0\text{ and }P_{0}^{f}(A\cap\{T(\cdot,f)=k\})=0
\]
for some $k\in\{0,\frac{1}{2}\}.$ $T$ is reasonable; thus \eqref{eq:reasonableness condition}
yields $P_{1}^{f}(A\cap\{T(\cdot,f)=k\}\cap\{T(\cdot,f)=1\})>0$ which
contradicts the fact that $\{T(\cdot,f)=k\},\{T(\cdot,f)=1\}$ are
disjoint sets. 
\end{proof}
Now we turn to establish a strong version of the uniqueness of ${\cal D\colon}$
\begin{thm}
\label{Theorem, main 2}Let $T$ be an anonymous, non-counterfactual,
reasonable tail test. If $T\nsim{\cal D}$ then for all $0<\epsilon<1$
there exists $\hat{f}\coloneqq(\hat{f_{0}},\hat{f_{1}})$ such that

\[
P_{0}^{\hat{f}}(\{T(\cdot,\hat{f})=1\})>1-\epsilon\text{ or }P_{1}^{\hat{f}}(\{T(\cdot,\hat{f})=0\})>1-\epsilon.
\]
\end{thm}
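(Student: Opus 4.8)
The plan is to leverage Theorem~\ref{thm:main Reasonable implies not error-free} together with the tail property to amplify a single positive-probability error into one of probability exceeding $1-\epsilon$. Since $T\nsim{\cal D}$, there is a pair $f=(f_0,f_1)$ with $T\nsim_f{\cal D}$; retracing the argument in the proof of Theorem~\ref{thm:main Reasonable implies not error-free}, we extract (w.l.o.g.) a measurable set $A$ with $P_0^f(A)>0$, $P_1^f(A)=0$ (so ${\cal D}$ points at expert $0$ on $A$, i.e.\ $A\subset\{{\cal D}(\cdot,f)=0\}$ up to a null set), yet $T$ either is inconclusive or points at the wrong expert on a positive-$P_0^f$-portion of $A$. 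Concretely, one of $A_1=\{T(\cdot,f)=0\}\cap\{{\cal D}(\cdot,f)=\tfrac12\}$ or $A_2=\{T(\cdot,f)=\tfrac12\}\cap\{{\cal D}(\cdot,f)=0\}$ has positive $P_0^f$-measure; in the first case Lemma~\ref{Lemma  if T reasonable then P1(A intersect AT0)} already forces $P_1^f(\{T(\cdot,f)=0\})>0$, a genuine (bounded-away-from-the-truth) error, while the second case is the one needing amplification.

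First I would reduce to the case where the ``bad set'' is actually a cylinder event — or is arbitrarily well approximated by one. Because $g_\infty$ is generated by the cylinders and $P_0^f(A_2)>0$, for any $\delta>0$ there is a finite prefix length $m$ and a cylinder $\omega^m$ with $P_0^f(A_2\mid\omega^m)>1-\delta$ and $P_0^f(\omega^m)>0$. By the reasonableness argument applied conditionally (or directly to $A_2$) together with Lemma~\ref{Lemma  if T reasonable then P1(A intersect AT0)}, the error $T$ makes on $A_2$ transfers to $P_1^f$ as well, and on $\omega^m$ the conditional error is large. The key point is that $\omega^m$ is a finite-horizon object: after stage $m$ the play path $h_m$ is fixed to some suffix-data, and on $\omega^m$ the two forecasters' continuation strategies determine continuation measures $P_0^{f},P_1^{f}$ conditioned on $\omega^m$, which are again a legitimate pair of induced measures up to the relabeling given by $h_m$.

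Second — and this is where the tail hypothesis does the work — I would build $\hat f=(\hat f_0,\hat f_1)$ by concatenating many independent ``copies'' of the continuation experiment from $\omega^m$: on the first block of coordinates run the continuation of $f$ from $\omega^m$; if the realized suffix lands in (the $h_m$-translate of) the bad set, restart a fresh copy of the same continuation experiment on the next block, and so on, $N$ times. Since $T$ is a tail test it ignores any finite prefix, so in particular it must return the \emph{same} verdict on each of these concatenated copies as it does on the stand-alone copy — the verdict is determined by behaviour arbitrarily far out. Under $P_0^{\hat f}$, the event that expert $0$'s continuation lands in the bad region in \emph{every} one of the $N$ blocks has probability at least $(1-\delta)^N$ of the relevant conditional mass... wait, I need the complementary amplification: rather, the probability that \emph{at least one} block exhibits the tail-pattern forcing $T=1$ (the wrong verdict for expert $0$) or $T=\tfrac12$ on a set where a descendant forces $T=1$ for $P_1$, goes to $1$ geometrically. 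Choosing $N$ large enough that $1-(\text{failure per block})^N>1-\epsilon$, and using that $T$'s tail verdict on the concatenation equals its verdict on a single block, yields $P_0^{\hat f}(\{T(\cdot,\hat f)=1\})>1-\epsilon$ or, via Lemma~\ref{Lemma  if T reasonable then P1(A intersect AT0)}, the dual statement for $P_1^{\hat f}$.

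I expect the main obstacle to be making the concatenation construction precise so that (i) each block is a genuine independent repetition of the continuation experiment — one must interleave the two experts' strategies correctly, recalling that a single forecasting strategy's output may depend on the other expert's past forecasts, so the ``copy'' must carry along a faithful simulation of both players' histories via the function $h$; and (ii) the tail-test invariance is invoked legitimately, i.e.\ the triplets $(\hat\omega,\hat f_0,\hat f_1)$ restricted to a late block and the stand-alone triplet $(\omega,f_0,f_1)$ \emph{eventually coincide} in the precise sense of \eqref{eq:16-1 tail condition}, which requires the positivity-of-forecasts clause to hold along the connecting prefix — this is exactly where the mutual-positivity part of the ``eventually coincide'' definition, and the set $A_2\subset\{{\cal D}(\cdot,f)=0\}$ (so $f_0$ assigns positive probability along the realized path), get used. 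Handling the bookkeeping so that the geometric estimate $1-(1-c)^N$ is applied to the correct conditional probabilities — and confirming that on the amplified construction it is still a single expert, not a charlatan, that induces the relevant measure — is the delicate part; the probabilistic estimate itself is routine once the setup is in place.
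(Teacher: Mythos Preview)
Your proposal has two genuine gaps.

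First, the case split is off. The set $A_2=\{T(\cdot,f)=\tfrac12\}\cap\{{\cal D}(\cdot,f)=0\}$ can never have positive $P_0^f$-mass when $T$ is reasonable: if $P_0^f(A_2)>0$ and $P_1^f(A_2)=0$, reasonableness of $T$ forces $P_0^f(A_2\cap\{T=0\})>0$, impossible since $A_2\subset\{T=\tfrac12\}$; hence $P_1^f(A_2)>0$, so $P_1^f(\{{\cal D}=0\})>0$, contradicting that ${\cal D}$ is error-free. There is thus no ``second case needing amplification.'' Theorem~\ref{thm:main Reasonable implies not error-free} already delivers a genuine error $P_1^f(\{T(\cdot,f)=0\})>0$ (w.l.o.g.), and it is \emph{this} quantity---not inconclusiveness---that must be pushed above $1-\epsilon$. (Your Case~$A_1$ also yields only a positive error, not a large one, so it too requires amplification.)

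Second, and more importantly, block concatenation is incompatible with the tail property. A tail test's verdict depends only on the suffix of the play path, so chaining $N$ finite blocks cannot produce a $1-(1-c)^N$ bound: only the tail---effectively the final block---determines $T$, and that block lands in the bad set with probability $c$, not $1-(1-c)^N$. The paper's argument uses no iteration. From $P_1^f(\{T(\cdot,f)=0\})>0$, L\'evy's upward theorem yields a cylinder $\tilde\omega^{n-1}$ with $P_1^f(\{T(\cdot,f)=0\}\mid\tilde\omega^{n-1})>1-\epsilon$; Lemma~\ref{Lemma  if T reasonable then P1(A intersect AT0)} then gives $P_0^f(\{T(\cdot,f)=0\}\cap\tilde\omega^{n-1})>0$, so both experts' forecasts are strictly positive along $\tilde\omega^{n-1}$ and the positivity clause in ``eventually coincide'' is satisfied. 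One then defines $\hat f$ to follow $\tilde\omega^{n-1}$ deterministically for $t<n$ and to agree with $f$ thereafter. The tail property gives $T(\omega,\hat f)=T(\omega,f)$ on $\{T(\cdot,f)=0\}\cap\tilde\omega^{n-1}$, whence $P_1^{\hat f}(\{T(\cdot,\hat f)=0\})=P_1^f(\{T(\cdot,f)=0\}\mid\tilde\omega^{n-1})>1-\epsilon$. You had the cylinder-approximation idea in your second paragraph, but applied it to $A_2$ under $P_0^f$; that would at best amplify $P_0^{\hat f}(\{T=\tfrac12\})$, which is not an error at all.
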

\begin{proof}
By Theorem \ref{thm:main Reasonable implies not error-free} (w.l.o.g.)
there exists a pair $f\coloneqq(f_{0},f_{1})$ such that $P_{1}^{f}(\{T(\cdot,f)=0\})>0$.
In addition, since $\{T(\cdot,f)=0\}$ is $g_{\infty}-measurable$
we can apply the Levy upwards theorem (\citet{Probability-with-Martingales},
Theorem 14.2.) to obtain

\[
\begin{array}{l}
\underset{t\rightarrow\infty}{lim}P_{1}^{f}(\{T(\cdot,f)=0\}|\:g_{t})\\
\\
=\underset{t\rightarrow\infty}{lim}E^{P_{1}^{f}}[{\ensuremath{\mathbf{1}}}_{\{T(\cdot,f)=0\}}|\:g_{t}]=E^{P_{1}^{f}}[{\ensuremath{\mathbf{1}}}_{\{T(\cdot,f)=0\}}|\:g_{\infty}]={\ensuremath{\mathbf{1}}}_{\{T(\cdot,f)=0\}},\ P_{1}^{f}-a.s.
\end{array}
\]

Therefore, there exists $B^{f}\subset\{T(\cdot,f)=0\}$ with $P_{1}^{f}(B^{f})=P_{1}^{f}(\{T(\cdot,f)=0\})$
such that for all $\omega\in B^{f}$ and for all $t\ge1,$

\begin{equation}
\underset{t\rightarrow\infty}{lim}P_{1}^{f}(\{T(\cdot,f)=0\}|\:\omega^{t})=1\text{ and }f_{1}(h^{t-1}(\omega,f_{0},f_{1}))[\omega_{t}]>0.\label{eq:in theorem 2}
\end{equation}

Let $0<\epsilon<1.$ Fix $\tilde{\omega}\in B^{f}$ and observe that
from \eqref{eq:in theorem 2} there exists $n=n_{(\epsilon,\tilde{\omega},f)}>1$
such that for all $t\geq n-1,$

\[
P_{1}^{f}(\{T(\cdot,f)=0\}\cap\tilde{\omega}^{t})>(1-\epsilon)P_{1}^{f}(\tilde{\omega}^{t})>0.
\]

\noindent Thus, applying Lemma \ref{Lemma  if T reasonable then P1(A intersect AT0)}
 yields $P_{0}^{f}(\{T(\cdot,f)=0\}\cap\tilde{\omega}^{n-1})>0$ and
consequently,

\noindent $f_{0}(h^{t-1}(\tilde{\omega},f_{0},f_{1}))[\tilde{\omega}_{t}]>0$
for all $1\leq t\leq n-1,$ is inferred from \eqref{eq:1 P(C_wt) =00003D00003D00003D00003D multiplef_i}.
Now, modify $f$ to be the forecasting strategy $\hat{f}$ which one-step-ahead
conditionals satisfy\footnote{Note that the corresponding forecasting strategy $\hat{f_{i}}$ determines
the one-step-ahead forecasts up to time $n$ only through the history
of outcomes and does not depend on the full histories.}

\[
\hat{f_{i}}(\omega^{t-1},\cdot,\cdot)[\omega_{t}]=\begin{cases}
\begin{array}{l}
1,\\
f_{i}(\omega^{t-1},\cdot,\cdot)[\omega_{t}],\\
0,
\end{array} & \begin{array}{l}
\omega^{t}=\tilde{\omega}^{t},\ t<n\\
\text{other}\\
\omega^{t}\neq\tilde{\omega}^{t},\ t<n.
\end{array}\end{cases}
\]

\noindent Observe that, by construction, for all $\omega\in\{T(\cdot,f)=0\}\cap\tilde{\omega}^{n-1}$
we obtain $h_{n}(\omega,f_{0},f_{1})=h_{n}(\omega,\hat{f_{0}},\hat{f_{1}}),$
and in addition to that, for all $1\leq t\leq n-1,\ i\in\{0,1\},$

\[
f_{i}(h^{t-1}(\omega,f_{0},f_{1}))[\omega_{t}]=f_{i}(h^{t-1}(\tilde{\omega},f_{0},f_{1}))[\omega_{t}]>0,\ \hat{f_{i}}(h^{t-1}(\omega,\hat{f_{0}},\hat{f_{1}}))[\omega_{t}]=\hat{f_{i}}(h^{t-1}(\tilde{\omega},\hat{f_{0}},\hat{f_{1}}))[\tilde{\omega}_{t}]>0.
\]

\noindent Hence, $(\omega,f_{0},f_{1}),(\omega,\hat{f_{0}},\hat{f_{1}})$
eventually coincide by \eqref{eq:16-1 tail condition} and since $T$
is a tail test it follows that $T(\omega,\hat{f_{0}},\hat{f_{1}})=T(\omega,f_{0},f_{1})=0$
yielding $\omega\in\{T(\cdot,\hat{f})=0\}\cap\tilde{\omega}^{n-1}.$ As a result,

\[
\begin{array}{l}
P_{1}^{\hat{f}}(\{T(\cdot,\hat{f})=0\})\\
\\
=P_{1}^{\hat{f}}(\{T(\cdot,\hat{f})=0\}|\:\tilde{\omega}^{n-1})\geq P_{1}^{\hat{f}}(\{T(\cdot,f)=0\}|\:\tilde{\omega}^{n-1})=P_{1}^{f}(\{T(\cdot,f)=0\}|\:\tilde{\omega}^{n-1})>1-\epsilon,
\end{array}
\]

\noindent and therefore completes the proof.
\end{proof}
Unfortunately, as the next example shows, this strong version of uniqueness
cannot be established without resorting to tail tests. The same example
also serves to demonstrate that a reasonable test is not necessarily
error-free.
\begin{example}
Assuming that from day two onward, along a realization $\overset{1}{\omega}\coloneqq(1,1,1,,,),$
two forecasting strategies are shown to have similar predictions,
according to an IID distribution with parameter $1,$ where on day
one, one expert assigns 1 to the outcome 1 whereas the other expert
assigns half. Let $\overrightarrow{h},\overleftarrow{h}$ denote the
corresponding uniquely induced play paths and consider the following
test:

\[
T(\omega,f_{0},f_{1})=\begin{cases}
\begin{array}{l}
{\cal D}(\omega,f_{0},f_{1}),\\
0,\\
1,
\end{array} & \begin{array}{l}
\text{other}\\
h=\overleftarrow{h}\\
h=\overrightarrow{h}.
\end{array}\end{cases}
\]

Note, for every triplet $(\omega,f_{0},f_{1})$, whose induced play
path coincides with $\overrightarrow{h}$ or $\overleftarrow{h}$,
there exists $i\in\{0,1\}$ such that

\begin{equation}
P_{i}^{f}(\{T(\cdot,f)=1-i\})=P_{i}^{f}(\{\overset{1}{\omega}\})=\frac{1}{2}<1\label{eq: reasonable and not error free, in exaple-1-1}
\end{equation}
where the most-left equality holds, since ${\cal D}$ is error-free.
Moreover, since $P_{i}^{f}(\{\overset{1}{\omega}\})>0$ for all $i\in\{0,1\}$
and ${\cal D}$ is a reasonable test, it follows that $T$ is reasonable
even as it admits a bounded error by \eqref{eq: reasonable and not error free, in exaple-1-1}.
The fact that $T$ is not a tail test follows directly from the anonymity
of $T$ along $\overrightarrow{h},\overleftarrow{h}.$ 
\end{example}

\section{{\normalsize{}\label{sec:Ideal-tests}Ideal tests}}

Recall that an error-free test eliminates the occurrences in which
the less-informed expert is pointed out. A stronger and more appealing
property is to point out the better-informed expert. Informally, we
would like to consider tests that have the following property: $P_{i}^{f}(\{T(\cdot,f)=i\})=1$
whenever $f_{0}\not=f_{1}$. However, there could be pairs of forecasters
that are not equal but induce the same probability distribution. 
\begin{defn}
A test $T$ is {\em ideal with respect to $W\subseteq F$} if for
all $f\in W\times W$ and $i\in\{0,1\}$ such that $P_{i}^{f}\neq P_{1-i}^{f},$

\[
P_{i}^{f}(\{T(\cdot,f)=i\})=1.
\]

\noindent It is called {\em ideal} if it is {\em ideal} with
respect to $F.$
\end{defn}
In other words, whenever expert $i$ knows the actual data generating
process and expert $1-i$ does not, an ideal test will surely identify
the informed expert. In addition, it is a straightforward corollary
of Proposition \ref{prop:if Error-free then A.C} that there exists
no ideal test with respect to a set of forecasting strategies whenever
one induced measure is absolutely continuous with respect to the other. 

It is a common notion that two measures $P,Q$ are mutually singular
with respect to each other, denoted $P\perp Q,$ if there exists a
set $A$ such that $P(A)=Q(A^{c})=1.$ 
\begin{defn}
\label{Def Two-forecasting-strategies, M.S}Two forecasting strategies,
$f_{0},f_{1}\in F,$ are said to be mutually singular with respect
to each other, if $P_{0}^{f}\perp P_{1}^{f}.$ A set $W\subseteq F$
is pairwise mutually singular if for any pair $f\in W\times W$ such
that $P_{0}^{f}\neq P_{1}^{f}\colon f_{0},f_{1}$ are mutually singular
with respect to each other. 
\end{defn}
In other words, two forecasting strategies are mutually singular with
respect to each other if their corresponding induced measures are
mutually singular with respect to each other. The next lemma asserts
that a reasonable test is able to perfectly distinguish between `far'
measures which are induced from forecasting strategies which are said
to be mutually singular with respect to each other. 
\begin{lem}
{}{}\label{lem:reasonable implies m.s}Let $f_{0},f_{1}\in F$ be
mutually singular with respect to each other. If $T$ is reasonable
then for all $i\in\{0,1\},$

\[
P_{i}^{f}(\{T(\cdot,f)=i\})=1.
\]
\end{lem}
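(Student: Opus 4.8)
The plan is to exploit the hypothesis $P_0^f \perp P_1^f$ directly via the definition of mutual singularity, and then apply the reasonableness of $T$ to the singularity witness set. First I would fix the set $A \subseteq \Omega^\infty$ given by mutual singularity, so that $P_0^f(A) = 1$ and $P_1^f(A^c) = 1$, i.e. $P_1^f(A) = 0$. Fixing $i = 0$ (the case $i = 1$ is symmetric by anonymity, or by relabelling), I want to show $P_0^f(\{T(\cdot,f) = 0\}) = 1$. The natural move is to apply Definition \ref{Def resonable} (reasonableness) to the set $A$ with the roles $i = 0$, $1 - i = 1$: since $P_0^f(A) > 0$ and $P_1^f(A) = 0$, we get $P_0^f(A \cap \{T(\cdot,f) = 0\}) > 0$. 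But this only gives positive probability, not probability one, so a single application is not enough.

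The key idea to upgrade this to probability one is to apply reasonableness not just to $A$ itself but to every measurable subset of $A$ on which $T \neq 0$. Concretely, let $B \coloneqq A \cap \{T(\cdot,f) = 0\}^c = A \cap (\{T(\cdot,f) = \tfrac12\} \cup \{T(\cdot,f) = 1\})$. This $B$ is measurable, $B \subseteq A$ so $P_1^f(B) = 0$, and if $P_0^f(B) > 0$ then reasonableness applied to $B$ would force $P_0^f(B \cap \{T(\cdot,f) = 0\}) > 0$ — contradicting that $B$ is disjoint from $\{T(\cdot,f) = 0\}$. Hence $P_0^f(B) = 0$, which together with $P_0^f(A) = 1$ gives $P_0^f(\{T(\cdot,f) = 0\}) \geq P_0^f(A \setminus B) = 1$. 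The symmetric argument with the set $A^c$ (on which $P_1^f$ is concentrated and $P_0^f$ vanishes) yields $P_1^f(\{T(\cdot,f) = 1\}) = 1$, completing both cases.

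I expect the only subtlety — not really an obstacle — to be making sure the set $B$ to which reasonableness is applied is chosen so that it is simultaneously (i) measurable, (ii) a $P_1^f$-null set, and (iii) disjoint from the target set $\{T(\cdot,f) = i\}$; the choice $B = A \cap \{T(\cdot,f)=i\}^c$ does all three at once precisely because $A$ is the singularity witness. One should also note that $T$ being non-counterfactual (hence measurable via $\hat T \circ h$) guarantees $\{T(\cdot,f) = i\}$ is $g_\infty$-measurable, so $B$ is a legitimate input to the reasonableness condition. No appeal to the derivative test or to Lemmas \ref{Lemma:liminf=00003Dlimsup}–\ref{Lemma: D exists and finite} is needed — this is a clean consequence of reasonableness plus the definition of mutual singularity.
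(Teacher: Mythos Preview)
Your proposal is correct and follows essentially the same idea as the paper's proof: both argue that if the ``bad'' part of $A$ (where $T\neq i$) had positive $P_i^f$-mass, reasonableness applied to that part would force $\{T=i\}$ to intersect it, a contradiction. The only cosmetic difference is that the paper packages the contrapositive step as a separate auxiliary lemma (Lemma~\ref{Lemma  if T reasonable then P1(A intersect AT0)}) and treats $k=\tfrac12$ and $k=1$ separately, whereas you apply reasonableness directly to the single set $B=A\cap\{T(\cdot,f)=0\}^c$; your version is slightly more economical but not substantively different.
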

The proof of Lemma \ref{lem:reasonable implies m.s} is relegated
to Appendix \ref{Appendix A sec:Missing-Proofs}. It should be noted
that Lemma \ref{lem:reasonable implies m.s} holds even for $T$ which
is not error-free. 

\vspace{0cm}

The next theorem provides a necessary and sufficient condition for
the existence of an ideal test over sets. 
\begin{thm}
\label{Theorem m.s caraterization}There exists a non-counterfactual
anonymous ideal test with respect to $W$ if and only if \textup{$W$
is} pairwise mutually singular.
\end{thm}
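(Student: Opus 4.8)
The plan is to prove both directions, using the derivative test $\mathcal{D}$ as the canonical witness for the "if" direction and Proposition \ref{prop:if Error-free then A.C} (or rather its contrapositive spirit) for the "only if" direction. For the "if" direction, suppose $W$ is pairwise mutually singular. I claim $\mathcal{D}$ restricted to $W\times W$ is ideal. We already know from Theorem \ref{Theorem 1 D is error free and reasonable} that $\mathcal{D}$ is reasonable, and from the definitions that it is anonymous and non-counterfactual. So fix $f=(f_0,f_1)\in W\times W$ with $P_i^f\neq P_{1-i}^f$. By Definition \ref{Def Two-forecasting-strategies, M.S}, $f_0,f_1$ are mutually singular with respect to each other, i.e.\ $P_0^f\perp P_1^f$. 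Now invoke Lemma \ref{lem:reasonable implies m.s} directly: since $\mathcal{D}$ is reasonable and $f_0,f_1$ are mutually singular, $P_i^f(\{\mathcal{D}(\cdot,f)=i\})=1$ for both $i\in\{0,1\}$. That is exactly the ideal property, so the "if" direction is done essentially for free once Lemma \ref{lem:reasonable implies m.s} is in hand.

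For the "only if" direction, suppose there exists a non-counterfactual anonymous ideal test $T$ with respect to $W$, and suppose toward a contradiction that $W$ is not pairwise mutually singular. Then there is a pair $f=(f_0,f_1)\in W\times W$ with $P_0^f\neq P_1^f$ but $P_0^f\not\perp P_1^f$. The natural move is to apply Lebesgue decomposition to this pair: write $P_1^f = \mu_{ac}+\mu_s$ where $\mu_{ac}\ll P_0^f$ and $\mu_s\perp P_0^f$; failure of mutual singularity means $\mu_{ac}\neq 0$, so there is a measurable set $E$ with $P_0^f(E)>0$ on which $P_1^f$ restricted to $E$ is absolutely continuous with respect to $P_0^f$ restricted to $E$ (and vice versa, on a possibly smaller set one gets mutual absolute continuity of the restrictions). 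The idea is then to build, from $f$ and this set $E$, a new pair $\tilde{f}\in W\times W$ whose induced measures are mutually absolutely continuous and unequal, and then quote the corollary of Proposition \ref{prop:if Error-free then A.C} noted in the text (no ideal test exists when one induced measure is absolutely continuous w.r.t.\ the other): an ideal test would need $P_0^{\tilde f}(\{T(\cdot,\tilde f)=0\})=1$, contradicting $P_0^{\tilde f}(\{T(\cdot,\tilde f)=0\})<1$.

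The main obstacle, and the step I'd spend the most care on, is the construction keeping us inside $W\times W$: the conditioning/restriction argument above naturally produces measures on a cylinder or on the set $E$, not a priori honest pairs of forecasting strategies in $W$, and $W$ is an arbitrary subset of $F$ with no closure assumptions. So the real content is to argue that one need not leave $W$ at all — instead of modifying the strategies, condition the ambient probability measures $P_0^f,P_1^f$ on an appropriate cylinder $\omega^t$ (which corresponds to looking at the tail of the same forecasting strategies $f_0,f_1$, which still lie in $W$) chosen so that the conditional measures $P_0^f(\cdot\mid\omega^t)$ and $P_1^f(\cdot\mid\omega^t)$ are "closer" to being mutually absolutely continuous; iterating via a martingale/Levy-type argument on the Radon–Nikodym derivative of the $ac$ part, one extracts along $P_0^f$-almost every $\omega$ in the $ac$-support a point where the conditional derivative is bounded away from $0$ and $\infty$, forcing, by Lemma \ref{Lemma:liminf=00003Dlimsup}, that $\mathcal{D}$ (hence any ideal test, since ideal $\Rightarrow$ it matches the forced verdict $\tfrac12$ on that cylinder's relevant part) is inconclusive with positive probability there — contradicting idealness of $T$ evaluated at the original $f$. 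In short: the "only if" direction reduces to showing that non-mutual-singularity of some pair in $W\times W$ already forces, via conditioning on a cylinder and Proposition \ref{prop:if Error-free then A.C}, a genuine failure of idealness for $T$ on that very pair $f$, so that no auxiliary pair outside $W$ is needed.
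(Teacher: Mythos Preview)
Your ``if'' direction is correct and identical to the paper's: $\mathcal{D}$ is reasonable (Theorem~\ref{Theorem 1 D is error free and reasonable}), so Lemma~\ref{lem:reasonable implies m.s} makes it ideal on any pairwise mutually singular $W$.

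Your ``only if'' direction, however, has a genuine gap and overlooks a one-line argument. The obstacle you yourself flag is fatal: $W\subseteq F$ is an \emph{arbitrary} set with no closure properties, so there is no reason a conditioned or restricted pair of strategies remains in $W\times W$. Proposition~\ref{prop:if Error-free then A.C} requires $P_1^{\tilde f}\ll P_0^{\tilde f}$ for some $\tilde f$ that you must actually exhibit in $W\times W$, and your conditioning-on-a-cylinder heuristic does not produce such a pair (conditioning $P_i^f$ on $\omega^t$ yields a measure, not a forecasting strategy in $W$; and even if it did, $W$ need not contain it). The martingale/L\'evy sketch at the end is too vague to close this.

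The paper's argument bypasses all of this: if $T$ is ideal with respect to $W$ and $f\in W\times W$ satisfies $P_0^f\neq P_1^f$, then by definition $P_0^f(\{T(\cdot,f)=0\})=1$ and $P_1^f(\{T(\cdot,f)=1\})=1$. The sets $\{T(\cdot,f)=0\}$ and $\{T(\cdot,f)=1\}$ are disjoint, so taking $A=\{T(\cdot,f)=0\}$ gives $P_0^f(A)=1$ and $P_1^f(A^c)=1$, i.e.\ $P_0^f\perp P_1^f$. In other words, the ideal test $T$ \emph{itself} furnishes the mutual-singularity witness for every pair in $W\times W$; no Lebesgue decomposition, no conditioning, and no auxiliary pair are needed.
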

\begin{proof}
$\Longleftarrow$From Lemma \ref{lem:reasonable implies m.s} and
Part 1 of Theorem \ref{Theorem 1 D is error free and reasonable}
we conclude that ${\cal D}$ is an ideal test with respect to $W$.

$\Longrightarrow$Let $T$ be a non-counterfactual anonymous ideal
test with respect to a set $W.$ Let $f\in W\times W$ be such that
$P_{0}^{f}\neq P_{1}^{f}$ and observe that since $\{T(\cdot,f)=0\},\{T(\cdot,f)=1\}$
are disjoint and $T$ is ideal, we obtain

\[
1=P_{i}^{f}(\{T(\cdot,f)=i\})=P_{1-i}^{f}(\{T(\cdot,f)=i\}^{c})
\]

\noindent for all $i\in\{0,1\},$ yielding $W$ that is pairwise mutually
singular.
\end{proof}
We conclude the section with an example of a test over a domain of
mutually singular forecasts:
\begin{example}
Let

\[
W_{IID}\times W_{IID}\coloneqq\{f|\ \forall i\in\{0,1\}\ \exists a_{f_{i}}\in[0,1]\ s.t\ \forall\omega\in\Omega^{\infty},\ f_{i}(\omega^{t},\cdot,\cdot)[1]\equiv a_{f_{i}}\}.
\]

\noindent For $\omega\in\Omega^{\infty}$ denote the average realization
by

\noindent 
\[
a_{\omega}\coloneqq\underset{t\rightarrow\infty}{lim}\left(\frac{\stackrel[n=1]{t}{\sum}1_{\{\omega_{n}=1\}}}{t}\right)
\]
(whenever the limit exists) and consider the following comparable
test 
\[
T(\omega,f_{0},f_{1})=\begin{cases}
\begin{array}{l}
1,\\
0.5,\\
0,
\end{array} & \begin{array}{l}
f_{1}(h^{0})[1]=a_{\omega}\neq f_{0}(h^{0})[1]\\
\text{other}\\
f_{0}(h^{0})[1]=a_{\omega}\neq f_{1}(h^{0})[1].
\end{array}\end{cases}
\]
\end{example}
\noindent Obviously, $T$ is well-defined, anonymous and non-counterfactual.
Showing that $T$ is ideal with respect to $W_{IID}$ is a mere application
of the law of large numbers.

\section{{\normalsize{}\label{sec:Existing-Tests}Existing tests}}

It is natural to inquire whether comparison tests previously proposed
comply with the properties we introduced. We turn to discuss the tests
proposed in \Citet*{Al-Najjar-2008} and \Citet*{Feinberg-Stewart-2008}.
It turns out that neither of these tests satisfies the full axiomatic
system which was introduced in Subsections \ref{subsec:first two axioms}
and \ref{subsec:second two axioms}, and hence does not belong to
the equivalence class represented by ${\cal D}.$ 

\subsection{{\normalsize{}\label{subsec:The-likelihood-ratio}The likelihood
ratio test}}

Al-Najjar and Weinstein (2008) introduced the following test:

\[
L(\omega,f_{0},f_{1})=\begin{cases}
\begin{array}{l}
1,\\
0.5,\\
0,
\end{array} & \begin{array}{l}
\underset{t\rightarrow\infty}{liminf}D_{f_{0}}^{t}f_{1}(\omega)>1\\
\text{other}\\
\underset{t\rightarrow\infty}{limsup}D_{f_{0}}^{t}f_{1}(\omega)<1.
\end{array}\end{cases}
\]

In other words, a likelihood ratio of one suggests that both experts
are likely equal  and so the test cannot determine which is better.
Similarly, the same conclusion holds whenever the likelihood ratio
oscillates infinitely often below and above one. Otherwise, if the
likelihood ratio is eventually greater than one (smaller than one)
then expert $1$ (expert $0$) is deemed superior. Note that this
test differs from ${\cal D}$ whenever the likelihood ratio is high
but finite. In our case, the test does not prefer any expert, whereas
the test $L$ does. It turns out that this test does not satisfy all
the properties we introduce:
\begin{claim}
\label{claim 1: al-najar}L is reasonable and is not error-free.
\end{claim}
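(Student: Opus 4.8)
The plan is to establish the two assertions separately, and the harder one---reasonableness---first.

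\textbf{Reasonableness of $L$.} Fix a pair of forecasting strategies $f=(f_0,f_1)$, an index $i\in\{0,1\}$, and a measurable set $A$ with $P_i^f(A)>0$ and $P_{1-i}^f(A)=0$; assume w.l.o.g.\ that $i=0$, so $P_0^f(A)>0$ and $P_1^f(A)=0$. I want to find a subset of $A$ of positive $P_0^f$-measure on which $L=0$, i.e.\ on which $\limsup_{t}D_{f_0}^t f_1<1$. The key observation is that, since $\cal D$ is reasonable (Part 1 of Theorem \ref{Theorem 1 D is error free and reasonable}), $P_0^f(A\cap\{{\cal D}(\cdot,f)=0\})>0$, i.e.\ $P_0^f(A\cap\{D_{f_0}f_1=0\})>0$; wherever $D_{f_0}f_1(\omega)=0$ we have $\limsup_t D_{f_0}^t f_1(\omega)=0<1$, hence $L(\omega,f_0,f_1)=0$. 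Thus $A\cap\{{\cal D}(\cdot,f)=0\}\subseteq A\cap\{L(\cdot,f)=0\}$ and so $P_0^f(A\cap\{L(\cdot,f)=0\})>0$, which is exactly reasonableness for $L$. (The anonymity of $L$---needed to phrase reasonableness symmetrically in $i$---follows from the fact that $D_{f_1}^t f_0=1/D_{f_0}^t f_1$, so that $\limsup D_{f_0}^t f_1<1 \iff \liminf D_{f_1}^t f_0>1$; one should note this explicitly.)

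\textbf{$L$ is not error-free.} Here the plan is to exhibit an explicit pair $f=(f_0,f_1)$ where one expert holds the true model yet $L$ points, with positive probability, at the other. The natural choice is two IID experts on $\Omega=\{0,1\}$ with parameters $p_0$ and $p_1$, $p_0\neq p_1$, say with $f_0$ correct (so Nature $=P_0^f$ is IID$(p_0)$) and $p_1$ chosen so that $\mathbb{E}_{P_0^f}[\log(f_1/f_0)]>0$; since the log-likelihood ratio increments have positive mean under $P_0^f$, the strong law of large numbers gives $D_{f_0}^t f_1(\omega)\to\infty$, in particular $\liminf_t D_{f_0}^t f_1(\omega)>1$, for $P_0^f$-a.e.\ $\omega$. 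Hence $P_0^f(\{L(\cdot,f)=1\})=1>0$, while $P_0^f$ is the true law of expert $0$; this violates Definition \ref{Def error-free}. Alternatively, and even more simply, one can use the pair from the Example at the end of Subsection \ref{subsec:Tail-Test}-style construction where a single-period discrepancy is never washed out. I would present the IID version since the computation of the sign of $\mathbb{E}_{p_0}[\log(p_1^{\omega_1}(1-p_1)^{1-\omega_1}) - \log(p_0^{\omega_1}(1-p_0)^{1-\omega_1})]$ is a one-line consequence of the fact that relative entropy $\mathrm{KL}(p_0\|p_1)>0$ whenever $p_0\neq p_1$: indeed that expectation equals $-\mathrm{KL}(p_0\|p_1)<0$, so we must instead take $f_1$ correct and $f_0$ the wrong IID parameter, giving $D_{f_0}^t f_1\to\infty$ $P_1^f$-a.s. and hence $P_1^f(\{L(\cdot,f)=1\})=1$ while expert $1$ is the true expert---contradicting error-freeness.

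\textbf{Main obstacle.} The reasonableness argument is essentially free once one leans on Theorem \ref{Theorem 1 D is error free and reasonable}, so the only real work is getting the counterexample's bookkeeping right: one must be careful about \emph{which} expert is correct and about the sign of the mean log-likelihood-ratio increment (the relative-entropy sign), since a careless choice produces $D_{f_0}^t f_1\to 0$ rather than $\to\infty$ and then $L$ would (correctly) identify the true expert on that example. The cleanest fix, which I would use, is to let $f_1$ be the true IID$(p)$ model and $f_0$ an IID$(q)$ model with $q\neq p$; then $\tfrac1t\log D_{f_0}^t f_1(\omega)\to \mathrm{KL}(p\|q)>0$ $P_1^f$-a.s., so $\liminf_t D_{f_0}^t f_1>1$ $P_1^f$-a.s., hence $P_1^f(\{L(\cdot,f)=1\})=1$, contradicting error-freeness. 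No topological or martingale machinery is needed beyond the strong law of large numbers.
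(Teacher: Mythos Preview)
Your reasonableness argument is correct and is precisely the one-line argument behind the paper's phrase ``follows directly from Part 1 of Theorem \ref{Theorem 1 D is error free and reasonable}'': since $\{{\cal D}(\cdot,f)=0\}=\{D_{f_0}f_1=0\}\subseteq\{\limsup_t D_{f_0}^t f_1<1\}=\{L(\cdot,f)=0\}$, reasonableness of ${\cal D}$ immediately yields reasonableness of $L$.

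Your counterexample for ``not error-free'', however, does not work. In your final version you take $f_1$ to be the true IID$(p)$ model and $f_0$ an IID$(q)$ model, conclude that $D_{f_0}^t f_1\to\infty$ $P_1^f$-a.s., and hence $P_1^f(\{L(\cdot,f)=1\})=1$. But this is the \emph{correct} verdict: expert $1$ is the true expert and $L$ says so. Definition \ref{Def error-free} is violated only when $P_{1-i}^f(\{L(\cdot,f)=i\})>0$, i.e.\ when the \emph{wrong} expert is pointed at with positive probability under the true law. In your IID example one has (as you essentially computed) $P_0^f(\{L=0\})=1$ and $P_1^f(\{L=1\})=1$, so $L$ is in fact error-free on this pair; two distinct IID parameters give mutually singular measures, and on such pairs $L$ (like any reasonable test, by Lemma \ref{lem:reasonable implies m.s}) identifies the true expert perfectly.

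What is needed is a pair for which the likelihood ratio stabilises at a \emph{finite} value strictly greater than $1$. The paper does exactly this with the single-period discrepancy you allude to but then discard: let $f_1$ deterministically predict the all-$1$ sequence $\overset{1}{\omega}$, and let $f_0$ agree with $f_1$ from period two on but assign probability $1-\epsilon$ to the outcome $1$ at period one. Then $D_{f_0}^t f_1(\overset{1}{\omega})=\tfrac{1}{1-\epsilon}>1$ for every $t$, so $L(\overset{1}{\omega},f_0,f_1)=1$, while $P_0^f(\{\overset{1}{\omega}\})=1-\epsilon>0$; hence $P_0^f(\{L(\cdot,f)=1\})\ge 1-\epsilon$, contradicting error-freeness (and showing the error can be made arbitrarily close to $1$). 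The feature your IID example lacks is that here $P_1^f\ll P_0^f$; it is precisely in this non-singular regime that $L$ can err.
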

\begin{proof}
Let $f_{1}$ be a forecasting strategy which deterministically predicts
$\overset{1}{\omega.}$ Let $0<\epsilon<1$ and let $f_{0}$ be the
forecasting strategy which predicts $(1-\epsilon)$ at day one and
meets $f_{1}$ from day two onward regardless of any past history.
Note that if $P_{0}^{f}$ is the true measure, then $L(\stackrel{1}{\omega}f_{0},f_{1})=\frac{1}{1-\epsilon}>1$
yielding $P_{0}^{f}(\{L(\cdot,f)=1\})\geq1-\epsilon.$ As a result,
since $\epsilon$ is taken arbitrarily, not only is $L$ not error-free
but it admits an arbitrarily large error. The fact that $L$ is reasonable
follows directly from Part 1 of Theorem \ref{Theorem 1 D is error free and reasonable}.
\end{proof}
\vspace{0cm}

\subsection{{\normalsize{}\label{subsec:The-Cross-Calibration-test}The cross-calibration
test}}

The cross-calibration  test introduced in Feinberg and Stewart (2008)
checks the empirical frequencies of the realization conditional on
each profile of forecasts that\textbf{ }occurs infinitely often (please
refer to Appendix \ref{sec:Appendix B The-Cross-Calibration} for
a formal definition). The test outputs a binary verdict (pass/fail)
for each of the experts separately, but does not rank them; nevertheless,
it induces a natural comparison test, $T_{cross}$,  defined as follows:
$T_{cross}=\frac{1}{2}$ if and only if both experts either pass or
fail the cross-calibration test whereas $T_{cross}=i$ if and only
if expert $i$ passes the cross-calibration test and expert $1-i$
fails. 
\begin{claim}
\label{claim 2 cross calibration is not reasonable} $T_{cross}$
is error-free and is not reasonable.
\end{claim}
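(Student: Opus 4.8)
The plan is to establish each of the two assertions in Claim \ref{claim 2 cross calibration is not reasonable} separately, relying on the properties of cross-calibration proved in Feinberg and Stewart (2008).

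\textbf{$T_{cross}$ is error-free.} Suppose expert $i$ knows the true data generating process, i.e.\ $f_i$ derives its forecasts from the true measure $P_i^f$ via Bayes rule. The key input from Feinberg and Stewart (2008) is that a forecaster who knows the truth passes the cross-calibration test with probability one, regardless of the strategies of the other experts. Hence $P_i^f$-almost surely expert $i$ passes. By the definition of $T_{cross}$, the event $\{T_{cross}(\cdot,f)=1-i\}$ requires that expert $i$ \emph{fails} the cross-calibration test; since this event has $P_i^f$-probability zero, we conclude $P_i^f(\{T_{cross}(\cdot,f)=1-i\})=0$, which is exactly the error-free condition. This direction is routine once the relevant pass-guarantee is quoted.

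\textbf{$T_{cross}$ is not reasonable.} Here the plan is to exhibit a pair of forecasting strategies $f=(f_0,f_1)$ and a measurable set $A$ witnessing the failure of \eqref{eq:reasonableness condition}. The natural construction is to take two forecasters that both issue constant (IID-type) forecasts but with \emph{different} parameters---say $f_0$ forecasts $a_0 \in (0,1)$ and $f_1$ forecasts $a_1 \in (0,1)$ with $a_0 \neq a_1$---so that the induced measures $P_0^f, P_1^f$ are mutually singular (by the law of large numbers, the realized frequency equals $a_0$ with $P_0^f$-probability one and equals $a_1 \neq a_0$ with $P_1^f$-probability one). Let $A$ be the set of realizations whose limiting frequency equals $a_1$; then $P_1^f(A)=1>0$ while $P_0^f(A)=0$. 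Reasonableness would demand $P_1^f(A \cap \{T_{cross}(\cdot,f)=1\})>0$. The obstruction is that under a constant forecast, cross-calibration restricted to the (single, always-occurring) forecast profile just checks whether the empirical frequency matches the announced value; on $A$, the frequency is $a_1$, so expert $1$'s forecast $a_1$ is cross-calibrated but so is expert $0$'s forecast $a_0$---wait, no: $a_0 \neq a_1$, so on $A$ expert $0$ \emph{fails}. So I must be more careful: I need a set where the \emph{informed} expert's verdict is not ``pass'' while the other's is not ``fail'', i.e.\ where $T_{cross}$ is inconclusive even though one measure vanishes on it. The cleaner route is to use a set $A$ on which \emph{both} experts pass (so $T_{cross}=\tfrac12$ there) yet which still has $P_1^f(A)>0=P_0^f(A)$; this can be arranged by taking $f_1$ a charlatan whose randomized forecasts make it pass cross-calibration on a positive-probability event that is null for the truthful $f_0$, or more simply by choosing a realization-dependent construction where the forecast profiles that occur infinitely often are the same for both and the frequencies match each expert's predictions along that profile. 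Concretely, one can take $f_0$ and $f_1$ to produce the same sequence of forecast \emph{profiles} but arrange via the freedom in randomization that on a set $A$ with $P_1^f(A)>0$, $P_0^f(A)=0$, both pass cross-calibration; then $A \cap \{T_{cross}(\cdot,f)=1\}=\emptyset$, violating reasonableness.

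\textbf{Main obstacle.} The delicate part is the second assertion: I must construct the counterexample so that on the positive-probability set $A$ (null for the other measure) the cross-calibration verdict is \emph{not} decisive in favor of the measure that lives there---either both pass or the live expert fails. This requires exploiting exactly the feature that distinguishes cross-calibration from the derivative test, namely that cross-calibration only inspects forecast profiles occurring infinitely often and checks empirical frequencies, so a charlatan mimicking the structure of the profiles while randomizing can slip through. I would look to the construction behind the classical result that a charlatan can pass calibration (Foster--Vohra) adapted to the cross-calibration setting, or simply to a minimal two-expert example where expert $1$ plays a suitable mixed strategy ensuring cross-calibration on a set disjoint from $\{T_{cross}=1\}$; verifying the frequencies along each infinitely-recurring profile is where the real work lies.
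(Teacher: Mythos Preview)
Your error-free argument is correct and essentially the paper's: a true expert passes cross-calibration almost surely (Feinberg--Stewart; the paper cites Dawid), so $\{T_{cross}=1-i\}\subset\{\text{expert }i\text{ fails}\}$ has $P_i^f$-measure zero.

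For ``not reasonable'' there is a genuine gap. You correctly isolate the target---a set $A$ with $P_i^f(A)>0=P_{1-i}^f(A)$ on which both experts pass, hence $T_{cross}=\tfrac12$---but you never construct such a pair; instead you drift toward a Foster--Vohra randomizing charlatan. That route is both unnecessary and risky: Feinberg--Stewart's main point is precisely that, in the presence of a true expert, a charlatan \emph{fails} cross-calibration for topologically ``most'' processes, so a generic randomized attack is exactly what the test is designed to resist.

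The paper's construction is far simpler and exploits the structural feature you named but did not use: cross-calibration checks only forecast profiles that recur \emph{infinitely often}. Take $f_0$ to be the Dirac measure on $(0,1,1,1,\dots)$ and $f_1$ the Dirac measure on $(1,1,1,1,\dots)$. Along the realization $(0,1,1,1,\dots)$ the two forecast profiles differ only at time $1$; from time $2$ on both announce the interval containing $1$, and the empirical frequency of $1$'s on that unique infinitely recurring profile equals $1$, so \emph{both} pass and $T_{cross}=\tfrac12$ there. Since $P_0^f$ puts mass $1$ on this single realization while $P_1^f$ puts mass $0$, taking $A$ to be this singleton violates \eqref{eq:reasonableness condition} directly (equivalently, $P_0^f\perp P_1^f$, so Lemma~\ref{lem:reasonable implies m.s} would force $P_0^f(\{T_{cross}=0\})=1$, a contradiction). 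The paper also supplies a non-degenerate version in Appendix~\ref{sec:Appendix B The-Cross-Calibration}, where both forecasters assign positive probability to every finite history, so the failure is not an artifact of Dirac measures.
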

\begin{proof}
Let $f_{0},f_{1}$ be forecasting strategies which deterministically
predict $\overset{0}{\omega}\coloneqq(0,1,1,,,),$ $\overset{1}{\omega,}$
respectively, and observe that since both $f_{0}$ and $f_{1}$ pass
the cross-calibration test on $h(\overset{0}{\omega},f_{0},f_{1})$
it follows that $T_{cross}(\stackrel{0}{\omega},f_{0},f_{1})=\frac{1}{2}$
yielding

\begin{equation}
1=P_{0}^{f}(\{\stackrel{0}{\omega}\})\leq P_{0}^{f}(\{T_{cross}(\cdot,f)=\frac{1}{2}\}).\label{eq: example CC-1-1-1-1-2-1}
\end{equation}

\noindent However, $f_{0},f_{1}$ are mutually singular with respect
to each other; so if $T_{cross}$ was a reasonable test then, by Lemma
\ref{lem:reasonable implies m.s}, it would satisfy

\noindent 
\[
P_{0}^{f}(\{T_{cross}(\cdot,f)=0\})=1
\]
which contradicts $(\ref{eq: example CC-1-1-1-1-2-1})$ and therefore
$T_{cross}$ is not reasonable. The fact that $T_{cross}$ is error-free
follows immediately from \Citet*{Dawid-1982} and hence omitted. 
\end{proof}
\vspace{0cm}

One could suspect that the counterexample used in the proof of Claim
\ref{claim 2 cross calibration is not reasonable} builds on the fact
that both experts use some Dirac measure and so assign zero probability
to any finite history that disagrees with that measure. Thus, a counterexample
where both forecasters assign a positive probability to any finite
history is provided in Appendix \ref{sec:Appendix B The-Cross-Calibration}.

\section{{\normalsize{}Summary}}

We study tests that compare two (self-proclaimed) experts in light
of some infinite sequence of forecasts and outcomes, where the goal
of the test is to spot the better informed one. We propose some natural
properties for such tests and construct the unique test (up to an
equivalence class) that complies with these properties. In \citet{Kavaler-Smorodinsky-2019}
we propose a framework where a comparison test provides a verdict
in finite time. We adapt the four properties to the new setting and
similarly propose a unique test for that environment. Some natural
directions for future research are to extend our results to settings
with more than two experts and to study alternative sets of properties. 

\vspace{0cm}

\section{{\normalsize{}Acknowledgments}}

The research of Smorodinsky is supported by the United States - Israel
Binational Science Foundation and the National Science Foundation
(grant 2016734), by the German-Israel Foundation (grant no. I-1419-118.4/2017),
by the Ministry of Science and Technology (grant 19400214) and by
Technion VPR grants and the Bernard M. Gordon Center for Systems Engineering
at the Technion. 

\vspace{0cm}

\bibliography{On_comparison_of_experts}

\newcommand{\noopsort}[1]{} \newcommand{\printfirst}[2]{#1}
  \newcommand{\singleletter}[1]{#1} \newcommand{\switchargs}[2]{#2#1}
\begin{thebibliography}{18}
\expandafter\ifx\csname natexlab\endcsname\relax\def\natexlab#1{#1}\fi
\providecommand{\url}[1]{\texttt{#1}}
\providecommand{\href}[2]{#2}
\providecommand{\path}[1]{#1}
\providecommand{\DOIprefix}{doi:}
\providecommand{\ArXivprefix}{arXiv:}
\providecommand{\URLprefix}{URL: }
\providecommand{\Pubmedprefix}{pmid:}
\providecommand{\doi}[1]{\href{http://dx.doi.org/#1}{\path{#1}}}
\providecommand{\Pubmed}[1]{\href{pmid:#1}{\path{#1}}}
\providecommand{\bibinfo}[2]{#2}
\ifx\xfnm\relax \def\xfnm[#1]{\unskip,\space#1}\fi
\bibitem[{Al-Najjar et~al.(2010)Al-Najjar, Sandroni, Smorodinsky \&
  Weinstein}]{Al-Najjar-2010}
\bibinfo{author}{Al-Najjar, N.}, \bibinfo{author}{Sandroni, A.},
  \bibinfo{author}{Smorodinsky, R.}, \& \bibinfo{author}{Weinstein, J.}
  (\bibinfo{year}{2010}).
\newblock \bibinfo{title}{Testing theories with learnable and predictive
  representations}.
\newblock {\it \bibinfo{journal}{Journal of Economic Theory}\/},  {\it
  \bibinfo{volume}{145(6)}\/}, \bibinfo{pages}{2203--2217}.
\bibitem[{Al-Najjar \& Weinstein(2008)}]{Al-Najjar-2008}
\bibinfo{author}{Al-Najjar, N.}, \& \bibinfo{author}{Weinstein, J.}
  (\bibinfo{year}{2008}).
\newblock \bibinfo{title}{Comparative testing of experts}.
\newblock {\it \bibinfo{journal}{Econometrica}\/},  {\it
  \bibinfo{volume}{76(3)}\/}, \bibinfo{pages}{541--559}.
\bibitem[{Billingsley(1995)}]{Billingsley}
\bibinfo{author}{Billingsley, P.} (\bibinfo{year}{1995}).
\newblock {\it \bibinfo{title}{Probability and Measure}\/}.
\newblock (\bibinfo{edition}{3rd} ed.).
\newblock \bibinfo{publisher}{New York.}
\bibitem[{Dawid(1982)}]{Dawid-1982}
\bibinfo{author}{Dawid, P.} (\bibinfo{year}{1982}).
\newblock \bibinfo{title}{The well-calibrated bayesian}.
\newblock {\it \bibinfo{journal}{Journal of the American Statistical
  Association}\/},  {\it \bibinfo{volume}{77}\/}, \bibinfo{pages}{605--613}.
\bibitem[{Dekel \& Feinberg(2006)}]{Dekel-Feinberg}
\bibinfo{author}{Dekel, E.}, \& \bibinfo{author}{Feinberg, Y.}
  (\bibinfo{year}{2006}).
\newblock \bibinfo{title}{Non-bayesian testing of a stochastic prediction}.
\newblock {\it \bibinfo{journal}{Review of Economic Studies}\/},  {\it
  \bibinfo{volume}{73}\/}, \bibinfo{pages}{893--936}.
\bibitem[{Echenique \& Shmaya(2008)}]{Echenique-Shmaya-2008}
\bibinfo{author}{Echenique, F.}, \& \bibinfo{author}{Shmaya, E.}
  (\bibinfo{year}{2008}).
\newblock \bibinfo{title}{You won't harm me if you fool me}.
\newblock {\it \bibinfo{journal}{Mimeo}\/}, .
\bibitem[{Edwards(1972)}]{Edwards}
\bibinfo{author}{Edwards, A.} (\bibinfo{year}{1972}).
\newblock {\it \bibinfo{title}{Likelihood}\/}.
\newblock \bibinfo{publisher}{Cambridge University Press}.
\bibitem[{Feinberg \& Stewart(2008)}]{Feinberg-Stewart-2008}
\bibinfo{author}{Feinberg, Y.}, \& \bibinfo{author}{Stewart, C.}
  (\bibinfo{year}{2008}).
\newblock \bibinfo{title}{Testing multiple forecasters}.
\newblock {\it \bibinfo{journal}{Econometrica}\/},  {\it
  \bibinfo{volume}{76}\/}, \bibinfo{pages}{561--582}.
\bibitem[{Fortnow \& Vohra(2009)}]{Fortnow-and-Vohra-2009}
\bibinfo{author}{Fortnow, L.}, \& \bibinfo{author}{Vohra, R.}
  (\bibinfo{year}{2009}).
\newblock \bibinfo{title}{The complexity of forecast testing}.
\newblock {\it \bibinfo{journal}{Econometrica}\/},  {\it
  \bibinfo{volume}{77}\/}, \bibinfo{pages}{93--105}.
\bibitem[{Foster \& Vohra(1998)}]{Foster-and-Vohra-1998}
\bibinfo{author}{Foster, D.}, \& \bibinfo{author}{Vohra, R.}
  (\bibinfo{year}{1998}).
\newblock \bibinfo{title}{Asymptotic calibration}.
\newblock {\it \bibinfo{journal}{Biometrika}\/},  {\it \bibinfo{volume}{85}\/},
  \bibinfo{pages}{379--390}.
\bibitem[{Kavaler \& Smorodinsky(2019)}]{Kavaler-Smorodinsky-2019}
\bibinfo{author}{Kavaler, I.}, \& \bibinfo{author}{Smorodinsky, R.}
  (\bibinfo{year}{2019}).
\newblock \bibinfo{title}{A cardinal comparison of experts}.
\newblock {\it \bibinfo{journal}{Mimeo}\/}, .
\bibitem[{Lehrer(2001)}]{Lehrer-2001}
\bibinfo{author}{Lehrer, E.} (\bibinfo{year}{2001}).
\newblock \bibinfo{title}{Any inspection is manipulable}.
\newblock {\it \bibinfo{journal}{Econometrica}\/},  {\it
  \bibinfo{volume}{69}\/}, \bibinfo{pages}{1333--1347}.
\bibitem[{Olszewski \& Sandroni(2008)}]{Olszewski-and-Sandroni-2008}
\bibinfo{author}{Olszewski, W.}, \& \bibinfo{author}{Sandroni, A.}
  (\bibinfo{year}{2008}).
\newblock \bibinfo{title}{Manipulability of future-independent tests}.
\newblock {\it \bibinfo{journal}{Econometrica}\/},  {\it
  \bibinfo{volume}{76}\/}, \bibinfo{pages}{1437--1466}.
\bibitem[{Pomatto(2016)}]{Pomatto-2016}
\bibinfo{author}{Pomatto, L.} (\bibinfo{year}{2016}).
\newblock \bibinfo{title}{Testable forecasts}.
\newblock {\it \bibinfo{journal}{Caltech. Mimeo}\/}, .
\bibitem[{Sandroni(2003)}]{Sandroni-2003}
\bibinfo{author}{Sandroni, A.} (\bibinfo{year}{2003}).
\newblock \bibinfo{title}{The reproducible properties of correct forecasts}.
\newblock {\it \bibinfo{journal}{International Journal of Game Theory}\/},
  {\it \bibinfo{volume}{32}\/}, \bibinfo{pages}{151--159}.
\bibitem[{Sandroni et~al.(2003)Sandroni, Smorodinsky \&
  Vohra}]{Sandroni-and-Smorodinsky-and-Vohra-2003}
\bibinfo{author}{Sandroni, A.}, \bibinfo{author}{Smorodinsky, R.}, \&
  \bibinfo{author}{Vohra, R.} (\bibinfo{year}{2003}).
\newblock \bibinfo{title}{Calibration with many checking rules}.
\newblock {\it \bibinfo{journal}{Mathematics of Operations Research}\/},  {\it
  \bibinfo{volume}{28}\/}, \bibinfo{pages}{141--153}.
\bibitem[{Shmaya(2008)}]{Shmaya-2008}
\bibinfo{author}{Shmaya, E.} (\bibinfo{year}{2008}).
\newblock \bibinfo{title}{Many inspections are manipulable}.
\newblock {\it \bibinfo{journal}{Theoretical Economics}\/},  {\it
  \bibinfo{volume}{3}\/}, \bibinfo{pages}{367--382}.
\bibitem[{Williams(1991)}]{Probability-with-Martingales}
\bibinfo{author}{Williams, M.} (\bibinfo{year}{1991}).
\newblock {\it \bibinfo{title}{Probability with Martingales}\/}.
\newblock \bibinfo{publisher}{Cambridge University Press}.

\end{thebibliography}

\section*{{\normalsize{} \center{APPENDIX}
}}

\appendix

\section{{\normalsize{}\label{Appendix A sec:Missing-Proofs}Missing proofs}}

\setcounter{equation}{0} \renewcommand{\theequation}{A.\arabic{equation}}
\begin{lem}
\label{Lemma: pairwise disjoint cylinders}{}{}Let ${\cal B}\coloneqq\{B_{i}\}_{i\in\mathbb{N}}$
be an arbitrary sequence of cylinders and set $\overline{B}\coloneqq\underset{i\in\mathbb{N}}{\bigcup}B_{i}.$
Then, there exists an index set $J\subseteq\mathbb{N}$ such that
$\{B_{j}\}_{j\in J}$ are pairwise disjoint, and $\overline{B}=\underset{j\in J}{\bigcup}B_{j}$.
\end{lem}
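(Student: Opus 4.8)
The plan is to exploit the fact that the cylinders of $\Omega^{\infty}$ form a \emph{laminar} family: for any two cylinders $C,C'$ either $C\cap C'=\emptyset$, or one of them contains the other. This is immediate from the definition, since a cylinder is the set of all realizations extending a fixed finite prefix, and two finite prefixes are either incomparable (in which case the associated cylinders are disjoint) or one extends the other (in which case the associated cylinders are nested). A second elementary fact I would record is that each cylinder $\omega^{t}$ carries a well-defined ``depth'', namely the length $t$ of its defining prefix, and that if one cylinder is \emph{strictly} contained in another then its depth is strictly larger.

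First I would let
\[
M\coloneqq\{\,i\in\mathbb{N}\ \mid\ \text{there is no }k\in\mathbb{N}\text{ with }B_{i}\subsetneq B_{k}\,\}
\]
be the set of indices of cylinders that are inclusion-maximal within the sequence ${\cal B}$, and show that $\overline{B}=\underset{i\in M}{\bigcup}B_{i}$. The inclusion $\supseteq$ is trivial. For $\subseteq$, fix $i$ and consider the family $\{B_{k}\mid B_{i}\subseteq B_{k}\}$. Any two of its members contain $B_{i}$, hence intersect, hence are nested by laminarity; so this family is a chain, and by the depth observation its members have pairwise distinct depths, all at most the depth of $B_{i}$, so the chain is finite and possesses a maximum $B_{k^{*}}$. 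Maximality of $B_{k^{*}}$ inside this chain, combined once more with laminarity, forces $B_{k^{*}}$ to be maximal in all of ${\cal B}$, i.e. $k^{*}\in M$, while $B_{i}\subseteq B_{k^{*}}$; hence $\overline{B}\subseteq\underset{i\in M}{\bigcup}B_{i}$.

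Next I would observe that distinct cylinder \emph{sets} occurring among $\{B_{i}\}_{i\in M}$ are pairwise disjoint: if $B_{i},B_{j}$ with $i,j\in M$ are not disjoint, then by laminarity one contains the other, say $B_{i}\subseteq B_{j}$, and maximality of $i$ forces $B_{i}=B_{j}$. To produce an honest index set, I would then define $J\subseteq M$ by keeping, for each cylinder set that arises as some $B_{i}$ with $i\in M$, the least index $i\in M$ realizing that set (no choice principle is needed, since $\mathbb{N}$ is well-ordered). Then distinct indices of $J$ yield distinct, hence disjoint, cylinder sets, while $\underset{j\in J}{\bigcup}B_{j}=\underset{i\in M}{\bigcup}B_{i}=\overline{B}$, which is exactly the claim.

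I do not expect a serious obstacle here; the only step that needs a little care is the finiteness argument guaranteeing that every $B_{i}$ lies below a cylinder that is maximal in ${\cal B}$. This is precisely where the integer-valued depth does the work rather than an abstract Zorn-type argument: a general chain need not attain its supremum, but a chain of cylinders of uniformly bounded depth is finite and therefore has a greatest element.
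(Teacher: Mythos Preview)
Your proof is correct and follows essentially the same strategy as the paper's: select the inclusion-maximal cylinders in $\mathcal{B}$, argue that every $B_{i}$ lies below one of them, and use that distinct maximal cylinders are disjoint. The paper's argument is terser and leaves the existence of a maximal cylinder above each $B_{i}$ implicit, whereas you supply the depth-based finiteness argument and the least-index selection to resolve duplicate sets; these are exactly the details one would fill in when expanding the paper's proof.
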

\begin{proof}
\noindent A cylinder is called maximal in $\overline{B}$ if it is
not a subset of any other cylinders in ${\cal B}$. Any cylinder in
${\cal B}$ is contained in some maximal cylinder in $\overline{B}$.
Let $J\subseteq\mathbb{N}$ be such that $\{B_{j}\}_{j\in J}$ is
the set of all distinct maximal cylinders. Since any two distinct
maximal cylinders are disjoint it follows that $\overline{B}=\underset{j\in J}{\bigcup}B_{j}$. 
\end{proof}
\vspace{0cm}

\begin{proof}[\textbf{Proof of Lemma \ref{Lemma:liminf=00003Dlimsup} }]
{}{}(a) Let $A$ be a measurable set which satisfies the left side
of $(a)$ and let $U\subset\Omega^{\infty}$ be any open set such
that $A\subset U.$ Fix $\epsilon>0,$ then for all $a\in A,N>0$
there exists $t=t_{(a,N,\epsilon)}>N$ such that

\begin{equation}
D_{f_{0}}^{t}f_{1}(a)=\frac{\stackrel[n=1]{t}{\prod}f_{1}(h^{n-1}(a,f_{0},f_{1}))[a_{n}]}{\stackrel[n=1]{t}{\prod}f_{0}(h^{n-1}(a,f_{0},f_{1}))[a_{n}]}=\frac{P_{1}^{f}(a^{t})}{P_{0}^{f}(a^{t})}\leq(\alpha+\epsilon).\label{eq:10}
\end{equation}

Consider the following set of cylinders

\[
{\cal B}\coloneqq\{a^{t}\subset U|\ a\in A,\ t>0,\ P_{1}^{f}(a^{t})\leq(\alpha+\epsilon)P_{0}^{f}(a^{t})\}.
\]

\noindent Note, it follows from $\eqref{eq:10}$ that $\mathcal{{\cal B}}$
is not empty where $sup\{t|\ a^{t}\in{\cal B}\}=\infty.$ By Lemma
\ref{Lemma: pairwise disjoint cylinders} we are provided with an
index set $J$ and a collection of pairwise disjoint sets $\{B_{j}\in{\cal B}\}_{j\in J}$
such that

\noindent 
\begin{equation}
\overline{B}\coloneqq\underset{B\in{\cal B}}{\bigcup}B=\underset{j\in J}{\bigcup}B_{j}\label{eq:11}
\end{equation}

\noindent yielding that $A\subseteq\overline{B}$ and $B_{j}\in{\cal B}.$
Hence,

\[
\begin{array}{l}
P_{1}^{f}(A)\\
\\
\leq P_{1}^{f}(\overline{B})=P_{1}^{f}(\underset{j\in J}{\bigcup}B_{j})\leq\underset{j\in J}{\sum}P_{1}^{f}(B_{j})\leq\underset{j\in J}{\sum}(\alpha+\epsilon)P_{0}^{f}(B_{j})=(\alpha+\epsilon)\underset{j\in J}{\sum}P_{0}^{f}(B_{j})\leq(\alpha+\epsilon)P_{0}^{f}(U),
\end{array}
\]

\noindent where the most-right inequality holds since $U\supset B_{j}'s$
are disjoint.

Since the above inequalities hold for any open set $U$ which contains
$A$ and

\[
P_{0}^{f}(A)=\underset{U-open\colon A\subset U}{inf}\{P_{0}^{f}(U)\},
\]

\noindent it follows that for all $\epsilon>0,$

\[
P_{1}^{f}(A)\leq(\alpha+\epsilon)P_{0}^{f}(A)
\]

\noindent which completes the proof of Case (a). The proof of Case
(b) is analogous and hence omitted. 
\end{proof}
\noindent We now turn to show that the derivative of one measure with
respect to another exists and is finite almost surely. 

\vspace{0cm}

\begin{proof}[\textbf{Proof of Lemma \ref{Lemma: D exists and finite}}]
{}{}Let $I\coloneqq\{\omega|\ \overline{D}_{f_{0}}f_{1}(\omega)=+\infty\}$.
Therefore, for all $\alpha>0,$

\noindent 
\[
I\subset\{\omega|\ \overline{D}_{f_{0}}f_{1}(\omega)\geq\alpha\}
\]
and it follows from part b of Lemma \ref{Lemma:liminf=00003Dlimsup}
that $P_{0}^{f}(I)\leq\frac{1}{\alpha}P_{1}^{f}(I).$ Now let $\alpha\rightarrow\infty$
to obtain 
\begin{equation}
P_{0}^{f}(I)=0,\label{eq:12}
\end{equation}
and consequently $\overline{D}_{f_{0}}f_{1}$ is finite $P_{0}^{f}-a.e.$
For the second part let

\[
R(a,b)\coloneqq\{\omega|\ \underline{D}{}_{f_{0}}f_{1}(\omega)<a<b<\overline{D}_{f_{0}}f_{1}(\omega)<\infty\}.
\]

\noindent Note that

\[
R(a,b)\subset\{\omega|\ \underline{D}{}_{f_{0}}f_{1}(\omega)\leq a\}
\]

\noindent as well as

\[
R(a,b)\subset\{\omega|\ \overline{D}_{f_{0}}f_{1}(\omega)\geq b\}
\]

\noindent where applying Lemma \ref{Lemma:liminf=00003Dlimsup} in
both directions gives

\noindent 
\[
bP_{0}^{f}(R(a,b))\leq P_{1}^{f}(R(a,b))\leq aP_{0}^{f}(R(a,b)).
\]

Hence, for all $0<a<b$,
\begin{equation}
P_{0}^{f}(R(a,b))=0\label{eq:14}
\end{equation}
where from \eqref{eq:12} and \eqref{eq:14} we obtain

\noindent 
\[
\begin{array}{l}
P_{0}^{f}(\{\omega|\ \underline{D}{}_{f_{0}}f_{1}(\omega)<\overline{D}_{f_{0}}f_{1}(\omega)<\infty\})\\
\\
=P_{0}^{f}(\underset{\underset{a,b\in\mathbb{Q}}{0<a<b}}{\bigcup}R(a,b))\leq\underset{\underset{a,b\in\mathbb{Q}}{0<a<b}}{\sum}P_{0}^{f}(R(a,b))=0.
\end{array}
\]
Therefore, $D_{f_{1}}f_{0}$ exists $P_{0}^{f}$- a.e. 
\end{proof}
\vspace{0cm}

\begin{proof}[\textbf{Proof of Proposition \ref{prop:equivalence}}]
Let $T,T_{1},T_{2}\in\top,\ f\in F,\text{ and }i\in\{0,1\}.$

\noindent Reflexivity:

\[
P_{i}^{f}(\{\omega|\;T(\omega,f_{0},f_{1})\neq T(\omega,f_{0},f_{1})\})=0\Longrightarrow T\sim T.
\]

\noindent Symmetry:

\[
P_{i}^{f}(\{\omega|\;T_{1}(\omega,f_{0},f_{1})\neq T_{2}(\omega,f_{0},f_{1})\})=0\iff P_{i}^{f}(\{\omega|\;T_{2}(\omega,f_{0},f_{1})\neq T_{1}(\omega,f_{0},f_{1})\})=0;
\]

\noindent hence, $T_{1}\sim T_{2}\iff T_{2}\sim T_{1}.$

\noindent Transitivity: Assume that $T_{1}\sim T,$ and $T\sim T_{2};$
hence

\[
T_{1}\sim_{f}T\Longrightarrow P_{i}^{f}(\{\omega|\;T_{1}(\omega,f_{0},f_{1})\neq T(\omega,f_{0},f_{1})\}^{c})=1,
\]

\noindent as well as

\[
T\sim_{f}T_{2}\Longrightarrow P_{i}^{f}(\{\omega|\;T(\omega,f_{0},f_{1})\neq T_{2}(\omega,f_{0},f_{1})\}^{c})=1.
\]

\noindent Thus

\[
\begin{array}{l}
P_{i}^{f}(\{\omega|\;T_{1}(\omega,f_{0},f_{1})\neq T_{2}(\omega,f_{0},f_{1})\}^{c})\\
\\
=P_{i}^{f}(\{\omega|\;T_{1}(\omega,f_{0},f_{1})\neq T(\omega,f_{0},f_{1})\}^{c}\cap\{\omega|\;T(\omega,f_{0},f_{1})\neq T_{2}(\omega,f_{0},f_{1})\}^{c})=1,
\end{array}
\]

\noindent yielding $P_{i}^{f}(\{\omega|\;T_{1}(\omega,f_{0},f_{1})\neq T_{2}(\omega,f_{0},f_{1})\})=0,$
and therefore $T_{1}\sim_{f}T_{2}.$ 
\end{proof}
\vspace{0cm}

\begin{proof}[\textbf{Proof of Lemma \ref{lem:reasonable implies m.s}}]
W.l.o.g.$\ $let $A$ be such that: $P_{0}^{f}(A)=1,P_{1}^{f}(A)=0$.
$T$ is reasonable, therefore $P_{0}^{f}(A\cap\{T(\cdot,f)=0\})>0$
from \eqref{eq:reasonableness condition}. Let $k\in\{\frac{1}{2},1\}$
and assume that

\[
P_{0}^{f}(A\cap\{T(\cdot,f)=k\})>0.
\]

\noindent Lemma \ref{Lemma  if T reasonable then P1(A intersect AT0)}
 yields

\noindent 
\[
P_{1}^{f}(A\cap\{T(\cdot,f)=k\})>0
\]
which contradicts the assumption that $P_{1}^{f}(A)=0.$ Hence, $P_{0}^{f}(A\cap\{T(\cdot,f)=k\})=0.$
As a result,

\[
P_{0}^{f}(A\cap\{T(\cdot,f)=0\})=P_{0}^{f}(A)=1
\]

\noindent and therefore $P_{0}^{f}(\{T(\cdot,f)=0\})=1.$ 

\vspace{0cm}
\end{proof}

\section{{\normalsize{}\label{sec:Appendix B The-Cross-Calibration}The cross-calibration
test}}

We now restate the cross-calibration test as suggested by \Citet*{Feinberg-Stewart-2008}.
Fix a positive integer $N>4$ and divide the interval $[0,1]$ into
$N$ equal closed subintervals $I_{1},...,I_{N},$ so that $I_{j}=[\frac{j-1}{N},\frac{j}{N}],\ 1\leq j\leq N$.
All results in their paper hold when $[0,1]$ is replaced with the
set of distributions over any finite set and the intervals $I_{j}$
are replaced with a cover of the set of distributions by sufficiently
small closed convex subsets. At the beginning of each period $t=1,2...,$
all forecasters (or experts) $i\in\{0,..,M-1\}$ simultaneously announce
predictions $I_{t}^{i}\in\{I_{1},...,I_{N}\}$, which are interpreted
as probabilities with which the outcome $1$ will occur in that period.
We assume that forecasters observe both the realized outcome and the
predictions of the other forecasters at the end of each period. 

The cross-calibration test is defined over sequences $(\omega_{t},I_{t}^{0},...,I_{t}^{M-1})_{t=1}^{\infty}$,
which specify, for each period $t$, the outcome $\omega_{t}\in\Omega$,
together with the prediction intervals announced by each of the $M$
forecasters. Given any such sequence and any $M$ - tuple $l=(I_{l^{0}},...,I_{l^{M-1}})\in\{I_{1},...,I_{N}\}^{M}$,
define $\zeta_{t}^{l}=1_{I_{t}^{i}=I_{l^{i}},\forall i=0,...,M-1},$ and $\nu_{t}^{l}=\stackrel[n=1]{t}{\sum}\zeta_{n}^{l},$  where  $\nu_{t}^{l}$ represents the number of times that the forecast profile
$l$ is chosen up to time $t$. For $\nu_{t}^{l}>0$, the frequency
$f_{t}^{l}$ of outcomes conditional on this forecast profile is given
by

\[
f_{t}^{l}=\frac{1}{\nu_{t}^{l}}\stackrel[n=1]{t}{\sum}\zeta_{n}^{l}\omega_{n}.
\]

Forecaster $i$ passes the cross-calibration test at the sequence
$(\omega_{t},I_{t}^{0},...,I_{t}^{M-1})_{t=1}^{\infty}$ if

\begin{equation}
\underset{t\rightarrow\infty}{limsup}|f_{t}^{l}-\frac{2l^{i}-1}{2N}|\leq\frac{1}{2N}\label{eq:passing the cross calibration test}
\end{equation}

\noindent for every $l$ satisfying $\underset{t\rightarrow\infty}{lim}\nu_{t}^{l}=\infty$. 

In the case of a single forecaster, the cross-calibration test reduces
to the classic calibration test, which checks the frequency of outcomes
conditional on each forecast that is made infinitely often. With multiple
forecasters, the cross-calibration test checks the empirical frequencies
of the realization conditional on each profile of forecasts that occurs
infinitely often. Note that if an expert is cross-calibrated, he will
also be calibrated.

\vspace{0cm}

Claim \ref{claim 2 cross calibration is not reasonable} demonstrated
why $T_{cross}$ is not reasonable, and so does not satisfy the set
of axioms we study. In that example, both forecasters used a Dirac
measure. We now turn to a slightly more elaborate example that demonstrates
the same thing; yet the forecasters assign a positive probability
to any finite history.
\begin{example}
\label{example: cross calibration positive each history}Set $N>4,\ M=2.$
Let $f_{0}$ be a convex combination of two forecasting strategies.
With probability $0.5$ it deterministically predicts $\overset{1}{\omega}$
and with the remaining probability it is an IID sequence of fair coin
flips. On the other hand, $f_{1}$ forecasts $1$ in period $t$ with
probability $1-\frac{1}{(t+2)}$, independent of past outcomes.

Then, conditional on the realization of $\stackrel{1}{\omega},$ both
experts repeatedly announce the interval $I_{N}$ from some finite
time onward. Consequently, over the profile $l=(1,I_{N},I_{N}),$
equation \eqref{eq:passing the cross calibration test} holds for
all $i$ and therefore both experts pass the cross-calibration test
over $\overset{1}{\omega}$ yielding that

\begin{equation}
T_{cross}(\stackrel{1}{\omega},f_{0},f_{1})=\frac{1}{2}.\label{eq:T_cross=00003D0.5-1}
\end{equation}

However, by construction, $P_{0}^{f}(\{\overset{1}{\omega}\})=\frac{1}{2}$
and $P_{1}^{f}(\{\overset{1}{\omega}\})=0,$ and yet, if $T_{cross}$
would be a reasonable test, then $0<P_{0}^{f}(\{\stackrel{1}{\omega}\})=P_{0}^{f}(\{T_{cross}(\cdot,f)=0\}\cap\{\stackrel{1}{\omega}\})$
which contradicts equality \eqref{eq:T_cross=00003D0.5-1}. 
\end{example}
\vspace{0cm}

\end{document}